\newtheorem{lemma}{Lemma}
\newtheorem{proposition}{Proposition}
\newtheorem{theorem}{Theorem}
\newtheorem{definition}{Definition}
\newtheorem{example}{Example}
\newtheorem{assumption}{Assumption}
\newtheorem{remark}{Remark}
\DeclareMathOperator*\argmin{arg\,min}
\DeclareMathOperator*\argmax{arg\,max}
\newcommand{\mE}{\mathbb{E}}
\newcommand{\cL}{\mathcal{L}}
\newcommand{\mP}{\mathbb{P}}
\newcommand{\cQ}{\mathcal{Q}}
\renewcommand{\thefootnote}{\fnsymbol{footnote}}
\def\spacingset#1{\renewcommand{\baselinestretch}%
{#1}\small\normalsize} \spacingset{1}
\title{\bf Fair Regression under Demographic Parity: A Unified Framework}
\author{
  Yongzhen Feng$^{\dagger}$\\
  Center for Statistical Science, Tsinghua University\\
  Weiwei Wang$^{\dagger}$\\
  Department of Statistics, Texas A\&M University \\
  Raymond K.~W.~Wong$^{\dagger}$ \\
  Department of Statistics, Texas A\&M University \\
  and\\
  Xianyang Zhang$^{\dagger}$\thanks{Corresponding author: zhangxiany@stat.tamu.edu.}\\
  Department of Statistics, Texas A\&M University
}
\date{}
\begin{document}
\maketitle

\renewcommand{\thefootnote}{\fnsymbol{footnote}}
\footnotetext[2]{Authors are listed in alphabetical order.}

\bigskip
\begin{abstract}
We propose a unified framework for fair regression tasks formulated as risk minimization problems subject to a demographic parity constraint. Unlike many existing approaches that are limited to specific loss functions or rely on challenging non-convex optimization, our framework is applicable to a broad spectrum of regression tasks. Examples include linear regression with squared loss, binary classification with cross-entropy loss, quantile regression with pinball loss, and robust regression with Huber loss. We derive a novel characterization of the fair risk minimizer, which yields a computationally efficient estimation procedure for general loss functions. Theoretically, we establish the asymptotic consistency of the proposed estimator and derive its convergence rates under mild assumptions. We illustrate the method's versatility through detailed discussions of several common loss functions. Numerical results demonstrate that our approach effectively minimizes risk while satisfying fairness constraints across various regression settings.
\end{abstract}

\noindent
{\it Keywords:} Empirical risk minimization, Fairness, Kantorovich barycenter problem. 
\vfill

\newpage
\spacingset{1.9} 
\section{Introduction} \label{sec:intro}
Algorithmic fairness is a rapidly evolving domain aimed at identifying and rectifying systematic disadvantages faced by specific demographic groups—defined by sensitive attributes such as race or gender—within machine learning systems. For comprehensive reviews, refer to \cite{mehrabi2021survey,pessach2022review,caton2024fairness}. In high-stakes decision-making scenarios, such as parole determination, it is imperative that algorithms do not propagate bias by relying on sensitive information. Consequently, significant research effort is dedicated to developing methods that satisfy fairness constraints without compromising prediction performance.

While numerous notions of algorithmic fairness exist (see Section 4.1 of \cite{mehrabi2021survey} for a summary), we focus on demographic parity, a widely adopted criterion requiring the probability distribution of the predicted output to be independent of sensitive attributes. Prior work has predominantly focused on developing fair algorithms for specific loss functions in classification and regression contexts \citep{agarwal2018reductions,agarwal2019fair,chzhen2020fair,silvia2020general,liu2022conformalized}. These approaches, however, typically necessitate a case-by-case analysis. In this work, we propose a unified framework for fair regression formulated as risk minimization under a demographic parity constraint. Our approach generalizes existing findings and applies to a broad spectrum of regression tasks, including linear regression with squared loss, binary classification with cross-entropy loss, quantile regression with pinball loss, and robust regression with Huber loss. Central to our method is a novel characterization of the optimal fair prediction function rooted in optimal transport theory. This theoretical insight yields a simple and computationally efficient estimation algorithm that accommodates this wide variety of loss functions.

Our contributions are two-fold. First, utilizing optimal transport theory, we demonstrate that the optimal fair prediction functions for various regression problems belong to a specific function class (defined in \eqref{eq-opt-form}). This finding, based on a connection between fair regression and the Kantorovich barycenter problem, offers a new interpretation of the optimal fair predictor and allows us to address tasks previously unexplored in the literature, such as fair quantile, robust, and Poisson regression. Second, we develop a two-step estimation procedure. The first step involves estimating the optimal predictor for each sensitive group independently, enabling the use of standard off-the-shelf algorithms (e.g., kernel ridge regression, random forests, neural networks). In the second step, we synthesize these individual predictors into a fair predictor via a common estimated quantile function. We establish the asymptotic consistency of our estimator and derive its specific convergence rates. Extensive numerical studies demonstrate the proposed method's effectiveness in minimizing risk while maintaining fairness across diverse regression settings.

\subsection{Related works}
Recent years have seen a proliferation of methods for fair regression under a variety of fairness notions. Most closely related to our work is \citet{chzhen2020fair}, who propose a post-processing approach and derive a closed-form expression for the optimal fair predictor under squared loss. This characterization leads to a plug-in estimator that avoids directly solving a non-convex fairness-constrained optimization problem. However, closed-form solutions of this type are largely limited to a small class of losses—most notably squared loss and, in some settings, cross-entropy loss \citep{gaucher2023fair}; see Section \ref{sec:fair_para} for further discussion. \citet{li2024multi} extend \citet{chzhen2020fair} to multi-dimensional predictors, but still focus on squared loss. While squared loss is a natural choice when the target is the conditional mean, it can be inadequate when robustness, tail behavior, or asymmetric error is central, motivating fair regression methods that accommodate other loss functions. More broadly, much of the recent fair regression literature continues to center on squared loss \citep{fallah2025statistical,fukuchi2025meta,hu2023fairness,taturyan2024regression,fukuchi2023demographic}, and fair regression under alternative risks remains comparatively underexplored.

Beyond these standard losses, \citet{liu2022conformalized} introduce a method for fair quantile regression. Instead of defining the fair predictor as a minimizer of the pinball-loss risk under a fairness constraint, they define it as the closest function (in $L_2$ distance) to an unconstrained quantile estimator, which enables an estimation strategy reminiscent of \citet{chzhen2020fair}. As a consequence, their estimator does not generally minimize the fairness-constrained pinball-loss risk. As we show in Example \ref{ex-lad}, the fair risk minimizer under pinball loss typically does not admit a closed-form expression, which precludes a direct plug-in approach.
In the context of generalized linear models, \citet{do2022fair} develop an in-process algorithm, but their fairness targets—Equalized Expected Outcome and Equalized Expected Log-likelihood—differ from the demographic parity notion considered here.

Finally, \citet{agarwal2019fair} propose a reduction-based approach for enforcing demographic parity with Lipschitz losses. A key limitation is the need to discretize continuous responses, which can introduce approximation error and additional tuning. In contrast, our framework directly accommodates continuous responses without discretization. Relatedly, several works formulate fair regression as numerical optimization problems \citep{carrizosa2025fair,deza2026fair}; these approaches can require substantial hyperparameter tuning and often yield solutions that are less transparent than methods based on explicit characterizations.

\section{Problem setup}\label{sec:problem_setup}
Let $(X,S)\in\mathcal{X} \times \mathcal{S}$ denote the predictive variables and $Y\in \mathbb{R}$ the response variable.
Here, $S$ represents a protected attribute (e.g., marital status, age, or socioeconomic status), and $X$ comprises the remaining predictive features. We assume $\mathcal{X}\subseteq\mathbb{R}^d$.
For simplicity, we focus on settings with a binary protected variable, assuming $\mathcal{S}=\{0,1\}$ without loss of generality.
Let $P(S=s)=r_s$ for $s=0,1$.
The proposed approach extends straightforwardly to cases with multiple protected classes.
In this paper, we focus on the fairness notion of demographic parity.

\begin{definition}[Demographic parity]
A prediction function $\eta: \mathbb{R}^d \times \mathcal{S} \rightarrow \mathbb{R}$ satisfies demographic parity if
$$
\sup _{t \in \mathbb{R}}\left|\mathbb{P}(\eta(X, S) \leq t \mid S=0)-\mathbb{P}\left(\eta(X, S) \leq t \mid S=1\right)\right|=0.
$$
\end{definition}

For clarity, we introduce the conditional random variables $X^0$ and $X^1$ such that $X \overset{d}{=} X^0$ when $S=0$ and $X \overset{d}{=} X^1$ when $S=1$, where ``$\overset{d}{=}$'' denotes equality in distribution.

Our objective is to learn a prediction function $f:\mathcal{X}\times \mathcal{S}\rightarrow \mathbb{R}$ that minimizes prediction error while satisfying demographic parity, i.e., $f(X^0,0)\overset{d}{=} f(X^1,1)$.
Formally, we seek the fair risk minimizer defined by:
\begin{equation} \label{FP}
\argmin_{f: f(X^0,0)\overset{d}{=} f(X^1,1)}\mathbb{E}[\mathcal{L}(f(X,S),Y)],
\end{equation}
where $\mathcal{L}:\mathbb{R}\times\mathbb{R}\rightarrow\mathbb{R}_{\ge 0}$ is a loss function. As detailed later, our framework accommodates a wide range of loss functions.
We refer to the optimization problem \eqref{FP} as the Fair Regression Problem (FRP) and its solution as the optimal fair prediction function.

It is worth noting that the empirical realization of the fairness constraint typically introduces non-convexity, posing significant computational challenges that often necessitate approximations or specialized algorithmic solutions \citep{Donini-Oneto-Ben-David18, komiyama2018nonconvex, wu2019convexity}.

\begin{remark}[Extension to Multiple Groups]
{\rm 
    While we restrict our exposition to the binary case ($K=2$) for notational clarity, the proposed framework generalizes naturally to situations with multiple protected groups $\mathcal{S}=\{0, 1, \dots, K-1\}$. The definition of demographic parity extends to requiring that $f(X^s, s)$ have the same distribution for all $s \in \mathcal{S}$. As discussed in Section \ref{sec:erm} and Appendix \ref{app:multi_group}, our theoretical characterization and estimation procedure remain valid for general $K$.
}
\end{remark}

\section{Characterization of optimal fair prediction functions} \label{sec:fair_para}

To address the computational challenges posed by fairness constraints and to accommodate general loss functions, we propose a novel characterization of the fair risk minimizer. We begin by introducing a fundamental assumption maintained throughout the paper.

\begin{assumption}\label{assum:f_star}
There exists a function $f^*:\mathcal{X}\times \mathcal{S}\rightarrow \mathbb{R}$ such that the response variable $Y$ depends on the predictive variables $(X,S)$ only through $f^*(X,S)$, i.e., $Y \perp (X,S) \mid f^*(X,S)$.
\end{assumption}

Assumption \ref{assum:f_star} is mild and is satisfied in many common regression settings. For instance, it holds for additive noise models $Y=f^*(X,S)+\varepsilon$, where $\varepsilon$ is independent of $(X,S)$ \citep{chzhen2020fair, agarwal2019fair, liu2022conformalized}. More generally, it encompasses data-generating processes of the form $Y|(X,S)\sim F(\cdot|\theta)$ with $\theta=f^*(X,S)$. Examples include binary regression, where $F(\cdot|\theta)$ is a Bernoulli distribution with success probability $\theta$, and Poisson regression, where $F(\cdot|\theta)$ is a Poisson distribution with rate $\theta$ \citep[e.g.,][]{do2022fair}.

Under Assumption \ref{assum:f_star}, the conditional risk can be reformulated as:
\begin{align}\label{eq-cost}
	\mathbb{E}[\mathcal{L}(f(X,S),Y)|X,S] &= \int \mathcal{L}(f(X,S),y) dF(y|f^*(X,S)) \nonumber \\
	&:= \mathcal{C}(f^*(X,S),f(X,S)),
\end{align}
where $F(\cdot|f^*(X,S))$ denotes the conditional distribution of $Y$ given $(X,S)$, which depends solely on $f^*(X,S)$.

Let $\nu_{f}$ and $\nu_{f^*}$ denote the distributions of $f(X,S)$ and $f^*(X,S)$, respectively. Furthermore, let $\nu_{f^*|s}$ be the distribution of $f^*(X^s,s)$ for $s=0,1.$ We demonstrate that the FRP \eqref{FP} shares a fundamental connection with the Kantorovich problem in optimal transport theory, using the cost function $\mathcal{C}$ defined in \eqref{eq-cost}. Specifically, the Kantorovich problem associated with cost $\mathcal{C}$ is defined as:
\begin{align*}
	\inf_{\gamma\sim \Pi(\nu_{f^*},\nu) }\int\mathcal{C}(x,y)d\gamma(x,y)=\inf_{\gamma\sim \Pi(\nu_{f^*},\nu) }\mathbb{E}_{(U^*,U)\sim \gamma}[\mathcal{C}(U^*,U)],
\end{align*}
where the infimum is taken over $\Pi(\nu_{f^*},\nu)$, the set of all joint distributions on $\mathbb{R}\times \mathbb{R}$ with marginals $\nu_{f^*}$ and $\nu$.

\begin{assumption}\label{assum:cost_function}
The distribution $\nu_{f^*}$ is non-atomic (e.g., absolutely continuous with respect to the Lebesgue measure), and the cost function $\mathcal{C}(\cdot,\cdot)$ is continuous.
\end{assumption}

Under Assumption \ref{assum:cost_function}, Theorem B of \cite{pratelli2007equality} guarantees that the solution to the Kantorovich problem is given by $\gamma=(I,T)_{\#}\nu_{f^*},$ where $T$ is the optimal transport map and $T_{\#}\nu_{f^*}=\nu$. Here, $T_{\#}\nu_{f^*}$ denotes the pushforward measure defined by $T_{\#}\nu_{f^*}(E)=\nu_{f^*}(T^{-1}(E))$ for any measurable set $E$. Consequently, we have:
\begin{align*}
	\inf_{\gamma\sim \Pi(\nu_{f^*},\nu) }\int\mathcal{C}(x,y)d\gamma(x,y)=\mathbb{E}[\mathcal{C}(U^*,T(U^*))],
\end{align*}
where $U^*\sim \nu_{f^*}.$

Proposition \ref{prop1} below characterizes the solution to the FRP by establishing a link between the FRP and the corresponding Kantorovich Barycenter Problem (KBP), defined as:
\begin{align}\label{eq-kbp}
	\inf_{\nu}\sum_{s=0}^1 r_s\inf_{\gamma\sim \Pi(\nu_{f^*|s},\nu) }\mathbb{E}_{(U^*,U)\sim \gamma}[\mathcal{C}(U^*,U)],
\end{align}
where the infimum is over all probability distributions on $\mathbb{R}$. The KBP generalizes the Wasserstein barycenter problem \citep{agueh2011barycenters}. Let $\bar{\nu}$ be the solution to the KBP in \eqref{eq-kbp}. Consider the Kantorovich problem $\inf_{\gamma\sim \Pi(\nu_{f^*|s},\bar{\nu}) }\mathbb{E}_{(U^*,U)\sim \gamma}[\mathcal{C}(U^*,U)]$ and let $T_s$ be the corresponding optimal transport map such that:
$$
\inf_{\gamma\sim \Pi(\nu_{f^*|s},\bar{\nu}) }\mathbb{E}_{(U^*,U)\sim \gamma}[\mathcal{C}(U^*,U)] =\mathbb{E}[\mathcal{C}(U^*,T_s(U^*))]
$$
for $U^*\sim \nu_{f^*|s}.$ We remark that the existence of the optimal transport maps $T_s$ is guaranteed because the non-atomicity of the marginal distribution $\nu_{f^*}$ (Assumption \ref{assum:cost_function}) implies that the conditional distributions $\nu_{f^*|s}$ are also non-atomic for all $s$ with $r_s > 0$.

\begin{proposition}\label{prop1}
Under Assumptions \ref{assum:f_star} and \ref{assum:cost_function}, the solution to the FRP belongs to the class:
	\begin{align*}
	\mathcal{G}=&\Big\{T_{s}\circ f^{\ast}(x,s): \text{$T_0$ and $T_1$ are transport maps satisfying }
    T_0\circ f^*(X^0,0) \overset{d}{=} T_1\circ f^*(X^1,1)\Big\}.
	\end{align*}
\end{proposition}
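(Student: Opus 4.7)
The plan is to reduce the Fair Regression Problem to the Kantorovich Barycenter Problem in two steps: first, rewrite the risk in terms of the cost function $\mathcal{C}$; and second, show that for any feasible predictor $f$, there is a predictor in $\mathcal{G}$ with risk no larger than that of $f$.

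First, I would invoke Assumption \ref{assum:f_star} together with iterated expectations to obtain $\mathbb{E}[\mathcal{L}(f(X,S),Y)\mid X,S] = \mathcal{C}(f^*(X,S), f(X,S))$, and then condition on $S$ to get
$$\mathbb{E}[\mathcal{L}(f(X,S),Y)] = \sum_{s=0}^{1} r_s \, \mathbb{E}\bigl[\mathcal{C}(f^*(X^s,s), f(X^s,s))\bigr].$$
For each $s$, the joint law of $(f^*(X^s,s), f(X^s,s))$ is some coupling of $\nu_{f^*\mid s}$ and $\nu_{f\mid s}$, the latter denoting the law of $f(X^s,s)$. This yields the pointwise lower bound
$$\mathbb{E}\bigl[\mathcal{C}(f^*(X^s,s), f(X^s,s))\bigr] \;\ge\; \inf_{\gamma \in \Pi(\nu_{f^*\mid s},\, \nu_{f\mid s})} \mathbb{E}_{(U^*,U) \sim \gamma}[\mathcal{C}(U^*,U)].$$
Demographic parity forces $\nu_{f\mid 0} = \nu_{f\mid 1} = \nu$ for some common distribution $\nu$, so the FRP risk is bounded below by the KBP value in \eqref{eq-kbp}.

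Second, I would establish attainability within $\mathcal{G}$. Fix any feasible $f$ with common marginal $\nu$. Non-atomicity of $\nu_{f^*}$ (Assumption \ref{assum:cost_function}) propagates to each conditional $\nu_{f^*\mid s}$ with $r_s>0$, and combined with continuity of $\mathcal{C}$, Theorem B of \cite{pratelli2007equality} yields an optimal transport map $T_s:\mathbb{R}\to\mathbb{R}$ with $T_{s\,\#}\nu_{f^*\mid s} = \nu$ realizing the Kantorovich infimum. Setting $\tilde f(x,s) := T_s(f^*(x,s))$ then gives $\tilde f(X^s,s) \sim \nu$ for both $s$, so $\tilde f \in \mathcal{G}$, and by the lower bound its per-group risk is no larger than that of $f$. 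Specializing this construction to the KBP optimizer $\bar\nu$ yields a minimizer of the FRP that lies in $\mathcal{G}$, proving the proposition.

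The main obstacle is the measure-theoretic step guaranteeing that the transport maps $T_s$ exist and actually attain the Kantorovich infimum; this hinges on propagating non-atomicity from $\nu_{f^*}$ to each $\nu_{f^*\mid s}$ and on invoking the Pratelli equality. A secondary subtlety is that a generic feasible $f(x,s)$ need not depend on $x$ only through $f^*(x,s)$, whereas $\tilde f$ does; this apparent loss of generality is precisely what Assumption \ref{assum:f_star} resolves, since the conditional cost depends on $(X,S)$ only through $f^*(X,S)$, so collapsing to a function of $f^*$ cannot increase the risk.
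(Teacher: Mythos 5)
Your proof is correct and takes essentially the same route as the paper's: rewrite the conditional risk via $\mathcal{C}$ using Assumption~\ref{assum:f_star}, lower-bound the group-wise risks by the per-group Kantorovich infima to arrive at the KBP value, then use non-atomicity of $\nu_{f^*|s}$ and continuity of $\mathcal{C}$ (via Theorem~B of \cite{pratelli2007equality}) to realize that value by transport maps $T_s$ pushing $\nu_{f^*|s}$ onto the barycenter $\bar\nu$, which yields an FRP minimizer in $\mathcal{G}$. The paper phrases this as the two-sided inequality $K^*\le F^*$ and $K^*\ge F^*$, whereas you phrase it as a lower bound plus a dominating replacement $\tilde f=T_s\circ f^*$ for any feasible $f$, but the substance is identical.
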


Proposition \ref{prop1} establishes that the optimal fair prediction function is governed by the transport maps inherent to the KBP. To operationalize this result and facilitate the practical implementation detailed in Section \ref{sec:erm}, we seek a more explicit characterization of $T_0$ and $T_1$. The following assumption on the cost function $\mathcal{C}$ ensures that these transport maps admit a computationally tractable structure.

\begin{assumption}\label{assum:cost_function2}
    The cost function $\mathcal{C}(x,y)$ is twice-differentiable and satisfies the condition:
	 \begin{align}\label{con-twice}
		\frac{\partial^2 \mathcal{C}(x,y)}{\partial x\partial y}< 0,
	\end{align}
	for all $x, y \in \mathbb{R}.$ Furthermore, we assume that the distributions $\nu_{f^*|s}$ are non-atomic for $s=0,1$.
\end{assumption}

In Appendix \ref{sec:proof}, we demonstrate that under these conditions, the transport maps can be decomposed as $T_s = Q_{\bar{\nu}}\circ F^*_s$, where $F^*_s$ denotes the cumulative distribution function (CDF) of $f^*(X^s,s)$, and $Q_{\bar{\nu}}$ is the quantile function associated with the barycenter $\bar{\nu}$.
For the specific case where $\mathcal{C}(x,y)=h(x-y)$ for a strictly convex function $h$, a rigorous proof is provided in Theorem 2.9 of \cite{santambrogio2015optimal}. We extend this result to general cost functions satisfying Condition \eqref{con-twice} via similar arguments; see the proof of Proposition \ref{prop2} in Appendix \ref{sec:proof}.

\begin{proposition}\label{prop2}
Under Assumptions \ref{assum:f_star}, \ref{assum:cost_function} and \ref{assum:cost_function2}, the solution to the FRP belongs to the class
	\begin{align}\label{eq-opt-form}
		\widetilde{\mathcal{G}}=\left\{Q \circ F_{s}^*\circ f^*(x,s): \text{$Q$ is a quantile function}\right\}.
	\end{align}
\end{proposition}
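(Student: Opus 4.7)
The plan is to start from Proposition \ref{prop1}, which already reduces the task to showing that, under Assumption \ref{assum:cost_function2}, the optimal transport maps $T_s$ from $\nu_{f^*|s}$ to the barycenter $\bar{\nu}$ admit the explicit factorization $T_s = Q_{\bar{\nu}} \circ F^*_s$. Once this is in hand, every element of the class $\mathcal{G}$ from Proposition \ref{prop1} can be rewritten as $Q \circ F^*_s \circ f^*(x,s)$ with the common quantile function $Q = Q_{\bar{\nu}}$, which is exactly the form defining $\widetilde{\mathcal{G}}$.

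The analytic heart of the argument is the following one-dimensional optimal transport fact: when the cost $\mathcal{C}(x,y)$ satisfies the Spence--Mirrlees (twist) condition $\partial^2 \mathcal{C}/\partial x\,\partial y < 0$, every optimal coupling between two non-atomic marginals is supported on a nondecreasing set. I would prove this via the classical exchange argument. If the support of an optimal coupling $\gamma$ contained two points $(x_1,y_1)$ and $(x_2,y_2)$ with $x_1 < x_2$ and $y_1 > y_2$, then Fubini applied to \eqref{con-twice} yields
\[
\mathcal{C}(x_1,y_1) + \mathcal{C}(x_2,y_2) - \mathcal{C}(x_1,y_2) - \mathcal{C}(x_2,y_1) \;=\; -\int_{x_1}^{x_2}\!\int_{y_2}^{y_1} \frac{\partial^2 \mathcal{C}}{\partial x\,\partial y}(u,v)\,dv\,du \;>\; 0,
\]
so reallocating mass onto the monotone pair $(x_1,y_2),(x_2,y_1)$ strictly lowers the transport cost, contradicting optimality. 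Since $\nu_{f^*|s}$ is non-atomic by Assumption \ref{assum:cost_function2}, Pratelli's result already invoked in the text guarantees that the optimizer is induced by a transport map, which must therefore be monotone nondecreasing.

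The last step is the classical one-dimensional fact that the unique nondecreasing map pushing a non-atomic measure $\nu_{f^*|s}$ forward to $\bar{\nu}$ is the monotone rearrangement $Q_{\bar{\nu}} \circ F^*_s$. Applying this for each $s \in \{0,1\}$, and noting that $\bar{\nu}$ -- and hence $Q_{\bar{\nu}}$ -- is common to both groups, gives $T_s = Q_{\bar{\nu}} \circ F^*_s$; combined with Proposition \ref{prop1}, the optimal fair predictor therefore lies in $\widetilde{\mathcal{G}}$.

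The main obstacle is extending Santambrogio's Theorem 2.9, which is stated only for translation-invariant strictly convex costs $\mathcal{C}(x,y) = h(x-y)$, to the broader family of costs characterized by the twist condition \eqref{con-twice}. Strict convexity of $h$ in Santambrogio's setting is exactly what produces the required sign, since $\partial^2 h(x-y)/(\partial x\,\partial y) = -h''(x-y) < 0$, so the exchange computation above is the natural replacement for that argument. Some care is needed to ensure that optimal couplings concentrate on graphs rather than on more general supports, but this is automatic from the non-atomicity imposed in Assumption \ref{assum:cost_function2}.
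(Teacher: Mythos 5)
Your proposal is essentially the same as the paper's proof: both start from Proposition \ref{prop1}, invoke $\mathcal{C}$-cyclical monotonicity of the support of an optimal plan, compute the cross-difference $\mathcal{C}(x_1,y_1)+\mathcal{C}(x_2,y_2)-\mathcal{C}(x_1,y_2)-\mathcal{C}(x_2,y_1)$ as a double integral of $\partial^2\mathcal{C}/\partial x\,\partial y$ to derive a contradiction with a non-monotone pair, and then conclude via non-atomicity and the monotone-rearrangement characterization that $T_s = Q_{\bar\nu}\circ F_s^*$. The only cosmetic difference is that you write the double integral via Fubini whereas the paper packages the identical computation as Green's theorem over a rectangle.
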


A key advantage of the characterization in Proposition \ref{prop2} is that any predictor in the class $\widetilde{\mathcal{G}}$ naturally satisfies the demographic parity constraint. Consequently, the original constrained FRP reduces to an unconstrained risk minimization over the set of quantile functions. By dropping the explicit fairness constraint, we identify the optimal quantile function $\widetilde{Q}$ as the minimizer of the risk within this class:
$$
\widetilde{Q} := \argmin_{Q}\mathbb{E}[\mathcal{L}(Q \circ F_{S}^*\circ f^*(X,S),Y)],$$
where the minimization is over all quantile functions. In specific regression settings, such as fair linear regression with squared loss (see Example \ref{ex4}) or binary classification with cross-entropy loss (see Example \ref{ex-cross-solution}), $\widetilde{Q}$ admits a closed-form solution. \cite{chzhen2020fair} leveraged this property for the squared loss case, identifying $\widetilde{Q}$ as a Wasserstein barycenter and proposing a plug-in estimation approach. However, a closed-form solution is not always derivable (as illustrated in Example \ref{ex-lad}), which limits the applicability of such plug-in strategies to specific loss functions. Furthermore, even when closed forms exist, they typically depend on unknown population quantities, necessitating problem-specific estimation strategies and theoretical analyses.
In contrast, Proposition \ref{prop2} offers a general characterization applicable to a broad spectrum of loss functions. This result facilitates the development of the unified estimation framework presented in Section \ref{sec:erm}.

Before describing the estimation procedure, we verify that Assumption \ref{assum:cost_function2} is mild and holds in many commonly encountered regression settings, as illustrated by the following examples.

\begin{example}[Regression with convex loss]\label{ex1}
{\rm Suppose $Y=f^*(X,S)+\varepsilon$ where $\varepsilon\sim G$ is independent of $(X,S)$, and the loss function is given by $\mathcal{L}(f(X,S),Y)=\mathcal{L}(y-f(X,S))$ for some strictly convex function $\mathcal{L}$. Then,
\begin{align*}
\mathcal{C}(f^*(X,S),f(X,S))=\int \mathcal{L}(f^*(X,S)-f(X,S)+u)d G(u)=h(f^*(X,S)-f(X,S)),
\end{align*}
where $h(x)=\int \mathcal{L}(x+u)dG(u)$ is strictly convex due to the strict convexity of $\mathcal{L}$. Consequently, Assumption \ref{assum:cost_function2} is satisfied.
}
\end{example}

\begin{example}[Binary classification with cross-entropy loss]\label{ex-cross}
{\rm
Suppose that conditional on $(X,S)$, $Y$ follows a Bernoulli distribution with success probability $f^*(X,S)\in (0,1)$. Consider the cross-entropy loss
$\mathcal{L}(f(X,S),Y)=-Y\log f(X,S)-(1-Y)\log (1-f(X,S))$. This implies:
\begin{align*}
\mathcal{C}(f^*(X,S),f(X,S))=-f^*(X,S)\log f(X,S)-(1-f^*(X,S))\log (1-f(X,S)).
\end{align*}
It follows that $\frac{\partial^2 \mathcal{C}(x,y)}{\partial x\partial y}=-\{y(1-y)\}^{-1}<0$ for any $x,y\in (0,1)$, satisfying Assumption \ref{assum:cost_function2}.
}
\end{example}

\begin{example}[Generalized regression]\label{ex-expon}
{\rm
Suppose $Y|X,S \sim h(Y)\exp\left(T(Y)\theta -A(\theta)\right)$ belongs to an exponential family with canonical parameter $\theta$. Let $g(\theta)=f^*(X,S)$, where $g$ is a known, strictly monotonic link function with inverse $g^{-1}$. Consider the negative log-likelihood loss (up to an additive constant):
$\mathcal{L}(f(X,S),Y)=A\circ g^{-1}(f(X,S))-T(Y) g^{-1}(f(X,S))$.
Using the fact that $\mathbb{E}[T(Y)]=A^{(1)}(\theta)$ and $\theta = g^{-1}(f^*(X,S))$, we obtain:
\begin{align*}
\mathcal{C}(f^*(X,S),f(X,S))=&A\circ g^{-1}(f(X,S))-A^{(1)}(g^{-1}(f^*(X,S))) g^{-1}(f(X,S)).
\end{align*}
Differentiation yields:
\begin{align*}
\frac{\partial^2 \mathcal{C}(x,y)}{\partial x\partial y}=-A^{(2)}(g^{-1}(x))\frac{\partial g^{-1}(x)}{\partial x} \frac{\partial g^{-1}(y)}{\partial y}.
\end{align*}
Since $A^{(2)}(x)>0$ (variance function) and $g^{-1}$ is strictly monotonic, we have $\frac{\partial^2 \mathcal{C}(x,y)}{\partial x\partial y} <0$, satisfying Assumption \ref{assum:cost_function2}.
}
\end{example}

The next proposition shows that the population risk can be rewritten in terms of the induced cost $\mathcal{C}$, yielding a pointwise characterization of the population minimizer $\widetilde{Q}(u)$.

\begin{proposition}[Risk decomposition and pointwise characterization of $\widetilde{Q}$]\label{prop:risk-decomp-pointwise}
Recall the induced cost $\mathcal{C}$ in \eqref{eq-cost} and the random variable $U$ defined in Section~\ref{sec:fair_para}.
Then for any candidate $Q\in\mathcal{Q}$,
\begin{equation}\label{eq:risk-decomp}
\mathbb{E}\!\left[\mathcal{L}(Q(U),Y)\right]
=\mathbb{E}_{U}\!\left[\sum_{s=0}^{1} r_s\,\mathcal{C}\!\left(Q_s^*(U),\,Q(U)\right)\right].
\end{equation}
Consequently, any population minimizer $\widetilde{Q}$ satisfies, for Lebesgue-a.e.\ $u\in[0,1]$,
\begin{equation}\label{eq:Q-pointwise}
\widetilde{Q}(u)\in\arg\min_{q\in\mathbb{R}}\sum_{s=0}^{1} r_s\,\mathcal{C}\!\left(Q_s^*(u),\,q\right).
\end{equation}
\end{proposition}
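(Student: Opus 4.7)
The plan is to prove the risk decomposition~(\ref{eq:risk-decomp}) first, then obtain the pointwise characterization~(\ref{eq:Q-pointwise}) by a standard interchange-of-min-and-integration argument. Throughout I would rely on the fact that $U := F_S^{*}\circ f^{*}(X,S)$ is uniform on $[0,1]$ both marginally and conditionally on $S=s$---a consequence of the non-atomicity of $\nu_{f^*|s}$ in Assumption~\ref{assum:cost_function2}---and that, conditional on $S=s$, $f^*(X,S) = Q_s^*(U)$ almost surely, where $Q_s^*$ is the quantile function inverse to $F_s^*$.

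For~(\ref{eq:risk-decomp}), I would first condition on $(X,S)$ and apply Assumption~\ref{assum:f_star} together with the definition of $\mathcal{C}$ in~(\ref{eq-cost}): since $Q(U)$ is $(X,S)$-measurable,
\begin{align*}
\mathbb{E}[\mathcal{L}(Q(U),Y)]
&= \mathbb{E}\bigl[\mathbb{E}[\mathcal{L}(Q(U),Y)\mid X,S]\bigr]
= \mathbb{E}[\mathcal{C}(f^*(X,S),\,Q(U))].
\end{align*}
Splitting the outer expectation by $S$, substituting $f^*(X^s,s) = Q_s^*(U)$, and using that the conditional law of $U$ given $S=s$ coincides with its marginal $\mathrm{Unif}[0,1]$ law, the two conditional terms recombine into the single expectation over $U\sim\mathrm{Unif}[0,1]$ on the right-hand side of~(\ref{eq:risk-decomp}).

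For~(\ref{eq:Q-pointwise}), observe that the integrand $g(u,q):=\sum_{s=0}^{1} r_s\,\mathcal{C}(Q_s^*(u),q)$ depends on $Q$ only through the scalar $Q(u)$, so $g(u,Q(u)) \ge \min_{q\in\mathbb{R}} g(u,q)$ pointwise. If $\widetilde{Q}(u)$ failed to attain this pointwise minimum on a set of positive Lebesgue measure, replacing $\widetilde{Q}$ there by any measurable selection from $\argmin_{q} g(u,q)$ would strictly decrease the risk, contradicting optimality. The main obstacle is that the minimization in Proposition~\ref{prop2} is restricted to quantile functions, so the replacement must itself be a valid quantile function; this is where Assumption~\ref{assum:cost_function2} does real work. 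The submodularity condition $\partial^2\mathcal{C}/\partial x\partial y<0$, combined with the monotonicity of each $Q_s^*$, permits a Topkis-style monotone comparative-statics argument producing a \emph{non-decreasing} measurable selection $u\mapsto q^*(u)$ from the pointwise argmin, which (after right-continuous modification) is a bona fide quantile function. This legitimizes the swap and forces~(\ref{eq:Q-pointwise}) a.e.
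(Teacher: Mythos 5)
Your derivation of the risk decomposition~\eqref{eq:risk-decomp} matches the paper's proof step for step: condition on $(X,S)$ to produce $\mathbb{E}[\mathcal{C}(f^*(X,S),Q(U))]$, substitute $f^*(X,S)=Q_S^*(U)$, and use that $U\mid S=s$ is $\mathrm{Unif}[0,1]$ for each $s$ (hence $U\indep S$ and $\mathbb{P}(S=s\mid U)=r_s$) to fold the sum over groups into a single expectation over $U$.

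For the pointwise characterization~\eqref{eq:Q-pointwise}, however, you do genuinely more than the paper. The paper simply observes that the integrand $h_u(q)$ depends on $Q$ only through the scalar $Q(u)$ and declares that any minimizer must therefore minimize $h_u$ pointwise a.e. You correctly flag the subtlety this elides: the optimization defining $\widetilde{Q}$ is constrained to (monotone) quantile functions, so an arbitrary measurable selection of the pointwise argmin cannot be used directly to contradict optimality of a non-pointwise-optimal $\widetilde{Q}$. Your fix---invoking the sign condition $\partial^2\mathcal{C}/\partial x\partial y<0$ together with monotonicity of $Q_s^*$ to get, via Topkis-type monotone comparative statics, a \emph{non-decreasing} measurable selection $u\mapsto q^*(u)$ of the argmin---is exactly the missing ingredient: it shows the pointwise optimum is attained within the constraint set, so $\min_{Q\text{ monotone}}\mathbb{E}_U[h_U(Q(U))]=\mathbb{E}_U[\min_q h_U(q)]$, which in turn forces any minimizer to hit the pointwise argmin a.e. This is a more careful argument than the paper's and is worth retaining; an alternative (equivalent in spirit) is to note that Proposition~\ref{prop2} and the KBP barycenter already exhibit a quantile function attaining the pointwise minimum, but your Topkis route is self-contained and cleaner.
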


\begin{remark}[Mixture view]\label{rm:mixture-view}
{\rm
Equivalently, if $\widetilde{Y}_u$ has conditional distribution
$\sum_{s=0}^{1} r_s\,F(\cdot\mid Q_s^*(u))$, then
$\sum_{s=0}^{1} r_s\,\mathcal{C}(Q_s^*(u),q)=\mathbb{E}[\mathcal{L}(q,\widetilde{Y}_u)]$.
}
\end{remark}

We now apply Proposition~\ref{prop:risk-decomp-pointwise} to the specific examples above.

\begin{example}[Regression with $l_2$ loss]\label{ex4}
{\rm
Specialize Example~\ref{ex1} to $\mathcal{L}(t)=t^2$ (squared loss). Then
$\mathcal{C}(x,q)=(q-x)^2+\sigma^2$, so by Proposition~\ref{prop:risk-decomp-pointwise},
for a.e.\ $u$,
\[
\widetilde{Q}(u)\in \arg\min_{q\in\mathbb{R}}\sum_{s=0}^1 r_s (q-Q_s^*(u))^2,
\]
whose unique minimizer is
\[
\widetilde{Q}(u)=r_0Q_0^*(u)+r_1Q_1^*(u).
\]
This recovers the closed form in Theorem~2.3 of \cite{chzhen2020fair}.
}
\end{example}

\begin{example}[Least absolute deviation regression]\label{ex-lad}
{\rm
In Example~\ref{ex1} with $\mathcal{L}(t)=|t|$ (LAD loss), Proposition~\ref{prop:risk-decomp-pointwise} yields, for a.e.\ $u$,
\[
\widetilde{Q}(u)\in \arg\min_{q\in\mathbb{R}}\sum_{s=0}^1 r_s\,\mathbb{E}\big|q-Q_s^*(u)-\varepsilon\big|.
\]
Equivalently, $\widetilde{Q}(u)$ is a median of the mixture distribution
$r_0\,G(\cdot+Q_0^*(u)) + r_1\,G(\cdot+Q_1^*(u))$, where $G$ is the CDF of $\varepsilon$.
In general this median does not admit a closed form.
}
\end{example}

\begin{example}[Binary classification with cross-entropy loss]\label{ex-cross-solution}
{\rm
In the setup of Example~\ref{ex-cross}, the induced cost is
$\mathcal{C}(x,q)=-x\log q-(1-x)\log(1-q)$.
Thus Proposition~\ref{prop:risk-decomp-pointwise} gives, for a.e.\ $u$,
\[
\widetilde{Q}(u)\in \arg\min_{q\in(0,1)}\sum_{s=0}^1 r_s\Big\{-Q_s^*(u)\log q-\big(1-Q_s^*(u)\big)\log(1-q)\Big\}.
\]
The first-order condition yields the unique solution
\[
\widetilde{Q}(u)=r_0Q_0^*(u)+r_1Q_1^*(u),
\]
consistent with \cite{gaucher2023fair}.
}
\end{example}

\begin{example}[Generalized regression]\label{ex-expon-solution}
{\rm
Continuing from Example~\ref{ex-expon}, the induced cost is
$\mathcal{C}(x,q)=A\!\circ g^{-1}(q)-A^{(1)}\!\big(g^{-1}(x)\big)\,g^{-1}(q)$.
Hence Proposition~\ref{prop:risk-decomp-pointwise} implies that for a.e.\ $u$, $\widetilde{Q}(u)$ solves
\[
\widetilde{Q}(u)\in \arg\min_{q\in\mathbb{R}}
\Big\{A\!\circ g^{-1}(q)-\Big(\sum_{s=0}^1 r_s A^{(1)}\!\big(g^{-1}(Q_s^*(u))\big)\Big) g^{-1}(q)\Big\}.
\]
Differentiating in $q$ and using that $g^{-1}$ is strictly monotone gives
\[
A^{(1)}\!\big(g^{-1}(\widetilde{Q}(u))\big)=\sum_{s=0}^1 r_s A^{(1)}\!\big(g^{-1}(Q_s^*(u))\big),
\]
so
\[
\widetilde{Q}(u)=g\circ (A^{(1)})^{-1}\!\left(\sum_{s=0}^1 r_s A^{(1)}\!\big(g^{-1}(Q_s^*(u))\big)\right).
\]
}
\end{example}

\begin{remark}\label{rm:known-unknown-cost}
{\rm 
Proposition~\ref{prop:risk-decomp-pointwise} characterizes $\widetilde{Q}$ pointwise via
$\widetilde{Q}(u)\in\arg\min_{q}\sum_{s=0}^{1} r_s\,\mathcal{C}\!\left(Q_s^*(u),q\right)$.
This leads to two regimes. \textbf{(i) Known induced cost.}
If the conditional law $Y\mid f^*(X,S)$ is specified enough that $\mathcal{C}(x,q)=\mathbb{E}[\mathcal{L}(q,Y)\mid f^*(X,S)=x]$
has a closed form (up to known constants), then one can estimate $Q_s^*$ and solve the above pointwise problem by plug-in; see e.g., \cite{chzhen2020fair}. \textbf{(ii) Unknown induced cost.}
In general, $x=f^*(X,S)$ may only identify a functional of $Y\mid X,S$ (rather than the full conditional distribution), so
$\mathcal{C}(x,q)$ is not identifiable as a function of $(x,q)$ alone. This is typical for robust/quantile settings, e.g.,
pinball (quantile) loss and Huber loss, where many conditional distributions share the same target $x$.
Our proposed estimator in the next section remains applicable in this regime because it does not require an explicit closed form for $\mathcal{C}$.
}
\end{remark}

\section{Empirical risk minimization with fairness constraints}\label{sec:erm}
Enlightened by the characterization in the previous section, we propose a two-step estimation procedure for finding the optimal fair prediction function.

\textbf{Step 1: Estimation of Group-wise Predictors.}
First, we estimate the conditional regression function $f^*(\cdot,s)$ using a consistent estimator $\widehat{f}(\cdot,s)$. To ensure theoretical independence, we assume the availability of an auxiliary training sample $\mathcal{D}^{est} = \{(X^{est}_j, S^{est}_j, Y^{est}_j)\}_{j=1}^{m}$, separate from the sample used in the subsequent step.
We perform standard regression separately for each protected group $s \in \{0, 1\}$ using $\mathcal{D}^{est}$.

Next, consider the primary training sample $\mathcal{D} = \{(X_i,S_i,Y_i): i=1,2,\dots,n\}$. Let $\widehat{F}_s$ be the empirical distribution function of the predicted values $\{\widehat{f}(X_i,S_i): S_i=s\}$ computed on this primary sample. For each observation $i \in \{1, \dots, n\}$ such that $S_i=s$, we compute the transformed variable:
$
\widehat{U}_i= \widehat{F}_{s}\circ \widehat{f}(X_i,S_i).
$

\textbf{Step 2: Estimation of the Common Quantile Function.}
In the second step, we estimate the optimal quantile function $\widetilde{Q}$ by solving the following empirical risk minimization problem:
\begin{align}\label{eq-est-Q}
\widehat{Q}=\argmin_{Q\in \widetilde {\mathcal{Q}}}\sum_{i=1}^n\mathcal{L}(Q(\widehat{U}_i),Y_i),
\end{align}
where $\widetilde {\mathcal{Q}}$ is a pre-specified class of functions defined on $[0,1]$ (e.g., non-decreasing functions).
Finally, the estimated optimal fair prediction function is constructed as:
$$\widehat{f}_{\text{fair}}(x,s):=\widehat{Q}\circ \widehat{F}_{s}\circ \widehat{f}(x,s).$$

\textbf{Extension to Multiple Groups.}
It is worth noting that this two-step procedure applies without modification to cases with $K > 2$ protected groups. In Step 1, one simply estimates $K$ distinct regression functions $\widehat{f}(\cdot, s)$. In Step 2, the optimization \eqref{eq-est-Q} remains a risk minimization over $\widetilde{\mathcal{Q}}$, where the empirical loss naturally aggregates data from all $K$ groups via the transformed samples $\{\widehat{U}_i\}_{i=1}^n$.

\subsection{Group-wise estimators} \label{sec:consistency_step1}
In our framework, the first step involves solving the regression problem on the auxiliary sample $\mathcal{D}^{est}$:
\begin{align} \label{first_f}
\widehat{f}(\cdot,s)=\argmin_{f\in\mathcal{F}_s}\sum_{j:S^{est}_j=s}\mathcal{L}(f(X^{est}_j,S^{est}_j),Y^{est}_j),
\end{align}
where $\mathcal{F}_s$ is a user-specified function class. This procedure yields a reasonable estimate for $f^*(\cdot,s)$ because $f^*(\cdot,s)$ is the minimizer of the population conditional risk for the loss functions discussed in Section \ref{sec:fair_para}. Formally, we observe that for each sensitive group $s$:
$$
f^*(\cdot,s) \in \argmin_{f\in\mathcal{F}_s}\mathbb{E}[\mathcal{L}(f(X,S),Y)|S=s].
$$
We provide detailed derivations for specific examples below.

\textbf{Binary Classification (Cross-Entropy Loss).}
Following Example \ref{ex-cross}, the population risk conditional on $S=s$ is:
\begin{align*}
    &\argmin_{f\in\mathcal{F}_s}\mathbb{E}[\mathcal{L}(f(X,S),Y)|S=s] \\
    &= \argmin_{f\in\mathcal{F}_s}\mathbb{E}_X\left[\mathbb{E}_Y[\mathcal{L}(f(X,S),Y)|X, S=s]\right] \\
    &= \argmin_{f\in\mathcal{F}_s}\mathbb{E}\left[-f^*(X,S)\log f(X,S) - (1-f^*(X,S))\log(1-f(X,S)) \Big| S=s\right].
\end{align*}
Conditioning further on $X=x$, the inner expression is minimized when $f(x,s) = f^*(x,s)$, which corresponds to the true conditional probability. Thus, if $f^*(\cdot, s)\in \mathcal{F}_s$, it is the minimizer of the objective.

\textbf{Poisson Regression.}
Consider Example \ref{ex-expon} with the canonical link (where $g$ is the identity and $A(\cdot) = \exp(\cdot)$). The optimization becomes:
\begin{align*}
    &\argmin_{f\in\mathcal{F}_s}\mathbb{E}[\mathcal{L}(f(X,S),Y)|S=s] \\
    &= \argmin_{f\in\mathcal{F}_s}\mathbb{E}\left[\exp(f(X, S)) - Y f(X, S) \Big| S=s\right] \\
    &= \argmin_{f\in\mathcal{F}_s}\mathbb{E}\left[\exp(f(X, S)) - \exp(f^*(X, S))f(X, S) \Big| S=s\right],
\end{align*}
where we used the fact that $\mathbb{E}[Y|X,S] = \exp(f^*(X,S))$. Minimizing the integrand pointwise with respect to $f(X,S)$ yields $f(X,S) = f^*(X,S)$.

\textbf{Squared Loss Regression.}
For the $l_2$ loss in Example \ref{ex4}, the risk minimization simplifies to:
\begin{align*}
    \argmin_{f\in\mathcal{F}_s}\mathbb{E}[(Y - f(X,S))^2|S=s]
    = \argmin_{f\in\mathcal{F}_s}\mathbb{E}\left[(f(X,S)- f^*(X,S))^2 + \text{Var}(\varepsilon) \Big| S=s\right].
\end{align*}
The solution is clearly $f(X,S) = f^*(X,S)$, the conditional expectation.

\textbf{Least Absolute Deviation (LAD) Regression.}
For the $l_1$ loss in Example \ref{ex-lad}, assuming the median of the noise $\varepsilon$ is 0, we have:
\begin{align*}
    \argmin_{f\in\mathcal{F}_s}\mathbb{E}[|Y - f(X,S)| | S=s]
    = \argmin_{f\in\mathcal{F}_s}\mathbb{E}[|f^*(X,S) + \varepsilon - f(X,S)| | S=s].
\end{align*}
Conditioning on $X=x$, the minimizer of $\mathbb{E}[|c + \varepsilon - z|]$ with respect to $z$ is the median of the distribution of $c+\varepsilon$, which is $c$ when $\text{Median}(\varepsilon)=0$. Thus, the minimizer is $f^*(X,S)$.

\subsection{Numerical implementation of Step 2}
The optimization in Step 2 \eqref{eq-est-Q} requires minimizing the empirical loss over a class of functions $\widetilde{\mathcal{Q}}$. Since $\widetilde{Q}$ is a quantile function, it is natural to restrict $\widetilde{\mathcal{Q}}$ to the set of non-decreasing functions. We discuss two efficient algorithms for this task.

\textbf{Isotonic Regression.}
If we make no assumptions on smoothness and simply require monotonicity, the estimation of $\widehat{Q}$ reduces to an \textit{isotonic regression} problem. Let the indices $(1), \dots, (n)$ permute the data such that the inputs are sorted: $\widehat{U}_{(1)} \le \widehat{U}_{(2)} \le \dots \le \widehat{U}_{(n)}$. Let $Y_{(i)}$ denote the response corresponding to $\widehat{U}_{(i)}$.
We aim to estimate the values $q_i = \widehat{Q}(\widehat{U}_{(i)})$ directly by solving:
\begin{align*}
    \min_{q_1, \dots, q_n} \sum_{i=1}^n \mathcal{L}(q_i, Y_{(i)})
    \quad \text{s.t.}\quad
    q_1 \le q_2 \le \cdots \le q_n.
\end{align*}
This constrained optimization can be solved efficiently using the Pool-Adjacent-Violators Algorithm (PAVA). Ties in the inputs (where $\widehat{U}_{(i)} = \widehat{U}_{(j)}$) are handled by enforcing equality constraints $q_i = q_j$ on the estimators. The resulting solution defines $\widehat{Q}$ as a non-decreasing step function with values $q_i$ at the observed points $\widehat{U}_{(i)}$.

\textbf{I-Spline Regression.}
If smoothness is desired, we can define $\widetilde{\mathcal{Q}}$ using Monotone Regression Splines (I-splines) \citep{ramsay1988monotone}. The I-spline basis functions, denoted by $\{\psi_j\}_{j=1}^M$, are defined on the interval $[0,1]$ as the cumulative integrals of M-splines (normalized B-splines). Because M-splines are non-negative, their integrals $\psi_j$ are inherently monotonically non-decreasing from 0 to 1. We model $Q$ as a linear combination of these basis functions:
$$ Q(u) = \alpha_0 + \sum_{j=1}^M \alpha_j \psi_j(u), \quad u \in [0,1], $$
where the complexity is determined by the number of knots and degree \citep{wang2021shape}. The estimation problem becomes:
\begin{align*}
    \min_{\alpha_0\in \mathbb{R},\ \alpha_j\geq 0} \sum_{i=1}^n \mathcal{L}\left(\alpha_0 +\sum_{j=1}^M \alpha_j\psi_j(\widehat{U}_i), Y_i \right).
\end{align*}
Since the basis functions $\psi_j$ are non-decreasing, constraining the coefficients to be non-negative ($\alpha_j \ge 0$) guarantees that the estimated function $\widehat{Q}$ is monotonically non-decreasing on $[0,1]$, ensuring it is a valid quantile function.

\section{Statistical theory}\label{sec:theory}
We begin by introducing the necessary notation. For $1\leq p<\infty$, let $L^p([0,1])$ denote the space of measurable functions $f$ defined on $[0,1]$ satisfying $\|f\|_p := \left(\int_{0}^{1}|f(u)|^p du\right)^{1/p}<\infty$. For any $Q_1, Q_2 \in L^2([0,1])$, we define the metric $d(Q_1, Q_2) = \|Q_1 - Q_2\|_2$. We denote the supremum norm by $\|f\|_\infty = \sup_{u\in [0,1]}|f(u)|$.
Let $\mathbb{P}_Y$ and $\mathbb{P}_{UY}$ denote the marginal distribution of $Y$ and the joint distribution of $(U, Y)$, respectively. For a measurable function $h$ of $(U,Y)$, we define the norm $\|h\|_{\mathbb{P}_{UY}}^2 = \mathbb{E}[h^2(U,Y)]$. Similarly, for a measurable function $g$ of $Y$, we define $\|g\|^2_{\mathbb{P}_Y} = \mathbb{E}[g^2(Y)]$. Finally, for any $a, b \in \mathbb{R}$, we write $a\vee b = \max\{a,b\}$.

We assume that the first-step estimator $\widehat{f}$ has been constructed using an independent dataset $\mathcal{D}^{est}$. For the estimation of $Q$, we consider a sample $\mathcal{D}=\{(X_i,S_i,Y_i):i=1,2,\dots,n\}$.
Let $\widehat{F}_s$ denote the empirical cumulative distribution function of $\{\widehat{f}(X_i,S_i): S_i=s\}$ within this sample:
\begin{align*}
\widehat{F}_s(\cdot)=\frac{1}{n_s}\sum_{i:S_i=s}\mathbf{1}\left\{\widehat{f}(X_i,S_i)\leq \cdot\right\},
\end{align*}
where $n_s=\sum^{n}_{i=1}\mathbf{1}\{S_i=s\}$. Throughout the analysis, we assume that a generic test point $(X, S)$ follows the same distribution as $(X_i, S_i)$ and is independent of the data used to estimate $\widehat{f}$.

To establish the consistency and convergence rates of the estimated fair prediction function, we restrict our analysis to the Sobolev space of order $k$ equipped with the supremum norm. Define:
$$\mathcal{W}^{k,\infty}:=\mathcal{W}^{k,\infty}([0,1])=\left\{Q: Q^{(a)}\in L^\infty([0,1]), \text{ for } 0\leq a\leq k\right\},$$
where the norm is given by $\|Q\|_{\mathcal{W}^{k,\infty}}=\max_{0\leq a\leq k}\|Q^{(a)}\|_\infty.$

\begin{assumption} \label{assum: Q}
Let $\mathcal{Q}=\{Q\in \mathcal{W}^{k,\infty}: \|Q\|_{\mathcal{W}^{k,\infty}}\leq R_1\}$ for some radius $R_1>0$ and smoothness $k\geq 2$. We assume that the true optimal quantile function satisfies $\widetilde{Q}\in \mathcal{Q}$.
\end{assumption}

Assumption~\ref{assum: Q} provides a regular function class for uniform control. We denote the true optimal fair prediction function by
$$f^*_{\text{fair}}(x,s):=\widetilde{Q}\circ F_{s}^*\circ f^*(x,s),$$
where $\widetilde{Q}$ is the optimal quantile function defined in Section \ref{sec:fair_para}.

\begin{assumption} \label{assum:loss}
	The loss function $\mathcal{L}(\cdot,\cdot)$ satisfies the following conditions:
    \begin{enumerate}[(i)]
        \item (Identifiability/Separability) For any $\varepsilon > 0$, there exists a constant $\lambda > 0$ such that:
	\begin{align}\label{eq-iden}
		\inf_{Q \in \mathcal{Q}: d(Q, \widetilde{Q}) \ge \varepsilon} \mathbb{E} \left[ \mathcal{L}\left\{Q(U),Y\right\} - \mathcal{L}\left\{\widetilde{Q}(U),Y\right\} \right] \ge \lambda d^2(Q, \widetilde{Q}).
	\end{align}
        \item (Lipschitz Continuity) For all $x,x'$ and $y$,
	$\vert \mathcal{L}\left(x,y\right) - \mathcal{L} \left(x',y\right) \vert  \le L(y) \vert x-x' \vert$
	where $\mathbb{E} \left[ L(Y)^2 \right]  < \infty$.
    \end{enumerate}
\end{assumption}

Below, we provide a brief discussion regarding $\widetilde{Q}$ and the separability condition in \eqref{eq-iden}.
Recall from Proposition~\ref{prop:risk-decomp-pointwise} that for any $Q\in\mathcal{Q}$,
\[
\mathbb{E}\!\left[\mathcal{L}(Q(U),Y)\right]
=\mathbb{E}_{U}\!\left[h\!\left(Q(U);Q_0^*(U),Q_1^*(U)\right)\right],
\qquad
h(q;y_0,y_1):=\sum_{s=0}^{1} r_s\,\mathcal{C}(y_s,q),
\]
where recall that $U$ is the probability integral transform defined in Section~\ref{sec:fair_para} with $U\sim \mathrm{Unif}[0,1]$. Thus, Assumption~\ref{assum:loss}(i) holds whenever $h(\cdot;y_0,y_1)$ is uniformly strongly convex in its decision
variable $q$, as formalized below.

\begin{proposition}\label{prop-sep-con}
Define $h(q;y_0,y_1):=\sum^{1}_{s=0}r_s \mathcal{C}(y_s,q)$. If $h(\cdot;y_0,y_1)$ is strongly convex with respect to $q$,
such that its second derivative with respect to $q$, denoted by $h^{(2)}(q;y_0,y_1)$, satisfies
$h^{(2)}(q;y_0,y_1)\geq 2\lambda$ for all $q,y_0,y_1$, then the separability condition in Assumption~\ref{assum:loss}(i) holds.
This condition is naturally satisfied for the squared loss and the cross-entropy loss.
\end{proposition}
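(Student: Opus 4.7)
The plan is to reduce the infinite-dimensional separability inequality to a pointwise, scalar strong-convexity statement about $h(q;y_0,y_1)=\sum_{s=0}^{1} r_s\,\mathcal{C}(y_s,q)$, using the risk-decomposition identity from Proposition~\ref{prop:risk-decomp-pointwise}. Applying that identity to both $Q$ and $\widetilde{Q}$ and subtracting yields
\begin{equation*}
\mathbb{E}\!\left[\mathcal{L}(Q(U),Y)-\mathcal{L}(\widetilde{Q}(U),Y)\right]
=\mathbb{E}_{U}\!\left[h\!\left(Q(U);Q_0^*(U),Q_1^*(U)\right)-h\!\left(\widetilde{Q}(U);Q_0^*(U),Q_1^*(U)\right)\right].
\end{equation*}
Thus it suffices to control the pointwise increment $h(Q(u);y_0,y_1)-h(\widetilde{Q}(u);y_0,y_1)$ with $(y_0,y_1)=(Q_0^*(u),Q_1^*(u))$ for Lebesgue-a.e.\ $u\in[0,1]$.

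The second step is the pointwise bound. By Proposition~\ref{prop:risk-decomp-pointwise}, $\widetilde{Q}(u)$ is a minimizer of $q\mapsto h(q;Q_0^*(u),Q_1^*(u))$ for a.e.\ $u$; combined with differentiability, this gives $h^{(1)}(\widetilde{Q}(u);Q_0^*(u),Q_1^*(u))=0$ at interior minimizers. A second-order Taylor expansion with mean-value remainder then yields
\begin{equation*}
h(Q(u);y_0,y_1)-h(\widetilde{Q}(u);y_0,y_1)
=\tfrac{1}{2}\,h^{(2)}(\xi_u;y_0,y_1)\bigl(Q(u)-\widetilde{Q}(u)\bigr)^{2}
\;\geq\;\lambda\bigl(Q(u)-\widetilde{Q}(u)\bigr)^{2},
\end{equation*}
where $\xi_u$ lies between $Q(u)$ and $\widetilde{Q}(u)$ and the inequality uses the assumed uniform lower bound $h^{(2)}\geq 2\lambda$. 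Taking expectations and using $U\sim\mathrm{Unif}[0,1]$ gives
\begin{equation*}
\mathbb{E}\!\left[\mathcal{L}(Q(U),Y)-\mathcal{L}(\widetilde{Q}(U),Y)\right]
\;\geq\;\lambda\int_{0}^{1}\!\bigl(Q(u)-\widetilde{Q}(u)\bigr)^{2}\,du
\;=\;\lambda\,d^{2}(Q,\widetilde{Q}),
\end{equation*}
which is exactly \eqref{eq-iden}; note the bound is uniform in $Q\in\mathcal{Q}$ (no restriction $d(Q,\widetilde{Q})\geq\varepsilon$ is used).

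For the two stated examples, the verification is direct. In the squared-loss case (Example~\ref{ex4}), $\mathcal{C}(x,q)=(q-x)^2+\sigma^2$, so $h^{(2)}(q;y_0,y_1)=2(r_0+r_1)=2$ uniformly, giving $\lambda=1$. In the cross-entropy case (Example~\ref{ex-cross}), one computes $\partial_q^2\mathcal{C}(x,q)=x/q^2+(1-x)/(1-q)^2$, so on any subset of $[0,1]^2\times(0,1)$ where $(y_0,y_1,q)$ are bounded away from $0$ and $1$ (which is guaranteed under Assumption~\ref{assum: Q} with $\mathcal{Q}$ restricted to that interior), $h^{(2)}$ is bounded below by an explicit positive constant. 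The main subtlety in the argument is, therefore, not any computation but rather ensuring the interiority needed for the first-order condition and for the uniform lower bound on $h^{(2)}$; for the squared loss this is automatic, while for unbounded loss geometries (e.g., cross-entropy) one restricts $\mathcal{Q}$ to predictors bounded away from the boundary of the natural domain.
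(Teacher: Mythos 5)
Your argument is essentially the paper's: both reduce to the pointwise inequality $h_U(Q(U))-h_U(\widetilde Q(U))\ge\lambda\bigl(Q(U)-\widetilde Q(U)\bigr)^2$ via first-order stationarity at $\widetilde Q(U)$ together with the uniform lower bound on $h^{(2)}$, and then integrate over $U\sim\mathrm{Unif}[0,1]$. One small correction to your cross-entropy remark: no interiority restriction on $\mathcal{Q}$ is needed, because $1/q^2\ge 1$ and $1/(1-q)^2\ge 1$ for all $q\in(0,1)$ already give $h^{(2)}(q;y_0,y_1)=\sum_{s}r_s\{y_s/q^2+(1-y_s)/(1-q)^2\}\ge\sum_s r_s\{y_s+(1-y_s)\}=1$ uniformly, which is exactly how the paper closes the example.
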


\begin{assumption} \label{assum: f}
The first-step estimator $\widehat f(\cdot,s)$ satisfies a tail bound condition. Specifically, there exist constants $C>0$ and a sequence $r_{n,d}\rightarrow 0$ such that for any $\tau>0$ and $s \in \{0,1\}$:
$$\mathbb{P}(|\widehat f(X,S) - f^{\ast}(X,S)|>\tau|S=s)\leq C\exp(-C\tau^2/r_{n,d}^2).$$
\end{assumption}

\begin{assumption} \label{assum: F}
The CDF $F_s^{\ast}$ admits a bounded density for $s=0,1$. Furthermore, the sample proportion of the protected group converges in probability: $n_0/n\rightarrow^p r_0\in (0,1)$, where $n_0=\sum^{n}_{i=1}\mathbf{1}\{S_i=0\}$.
\end{assumption}

Assumption~\ref{assum: f} quantifies the first-step prediction error via a high-probability bound, and Assumption~\ref{assum: F} ensures smoothness/invertibility of the relevant distributional maps and non-degenerate group proportions.

\begin{theorem}[Consistency]\label{thm_consist}
Under Assumptions \ref{assum: Q}--\ref{assum: F}, as $n \to \infty$,
$$d(\widehat Q, \widetilde{Q})=o_p(1).$$
\end{theorem}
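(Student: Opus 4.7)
Plan of proof. The plan is to combine (i) the separability condition in Assumption~\ref{assum:loss}(i), which reduces consistency in $d$ to consistency of the population excess risk $\mathbb{E}[\mathcal{L}(\widehat Q(U),Y)]-\mathbb{E}[\mathcal{L}(\widetilde Q(U),Y)]$, with (ii) an M-estimator decomposition that accounts for the substitution of the estimated inputs $\widehat U_i=\widehat F_{S_i}\circ\widehat f(X_i,S_i)$ for the true transforms $U_i=F_{S_i}^*\circ f^*(X_i,S_i)$. Writing $g_Q(u,y):=\mathcal{L}(Q(u),y)$ and letting $\mathbb{P}$, $\mathbb{P}_n$, $\widehat{\mathbb{P}}_n$ denote, respectively, the population mean, the empirical mean at $(U_i,Y_i)$, and the empirical mean at $(\widehat U_i,Y_i)$, I split
$$\mathbb{P}g_{\widehat Q}-\mathbb{P}g_{\widetilde Q}=(\mathbb{P}-\mathbb{P}_n)(g_{\widehat Q}-g_{\widetilde Q})+(\mathbb{P}_n-\widehat{\mathbb{P}}_n)g_{\widehat Q}+\bigl[\widehat{\mathbb{P}}_n g_{\widehat Q}-\widehat{\mathbb{P}}_n g_{\widetilde Q}\bigr]+(\widehat{\mathbb{P}}_n-\mathbb{P}_n)g_{\widetilde Q}+(\mathbb{P}_n-\mathbb{P})g_{\widetilde Q},$$
where the bracketed term is $\leq 0$ by the definition of $\widehat Q$ in \eqref{eq-est-Q}, and the last term is $o_p(1)$ by the weak law of large numbers.

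For the uniform-deviation piece $(\mathbb{P}-\mathbb{P}_n)(g_{\widehat Q}-g_{\widetilde Q})$, I would establish a Glivenko-Cantelli property for the centered class $\mathcal{F}_0:=\{g_Q-g_{\widetilde Q}:Q\in\mathcal{Q}\}$. The Lipschitz bound in Assumption~\ref{assum:loss}(ii) gives $|g_Q(u,y)-g_{\widetilde Q}(u,y)|\leq 2R_1\,L(y)$, which is a square-integrable envelope under $\mathbb{E}[L(Y)^2]<\infty$. Because $\mathcal{Q}$ is a bounded Sobolev ball $\mathcal{W}^{k,\infty}$ with $k\geq 2$, it has finite $\|\cdot\|_\infty$-covering numbers, and the Lipschitz dependence of $g_Q$ on $Q$ transfers this to $L^2(\mathbb{P})$-brackets for $\mathcal{F}_0$ (brackets $[g_{Q_j}-L(\cdot)\varepsilon,\,g_{Q_j}+L(\cdot)\varepsilon]$ from an $\varepsilon$-net $\{Q_j\}$ in sup norm). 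Standard uniform-LLN arguments then yield $\sup_{Q\in\mathcal{Q}}|(\mathbb{P}_n-\mathbb{P})(g_Q-g_{\widetilde Q})|=o_p(1)$.

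For the two plug-in terms, since both $\widehat Q$ and $\widetilde Q$ lie in $\mathcal{Q}$ they are Lipschitz with constant at most $R_1$; combined with Assumption~\ref{assum:loss}(ii) and Cauchy-Schwarz this gives, uniformly over $Q\in\mathcal{Q}$,
$$|(\mathbb{P}_n-\widehat{\mathbb{P}}_n)g_Q|\leq R_1\Bigl(\frac{1}{n}\sum_{i=1}^n L(Y_i)^2\Bigr)^{1/2}\Bigl(\frac{1}{n}\sum_{i=1}^n(U_i-\widehat U_i)^2\Bigr)^{1/2}.$$
The first factor is $O_p(1)$ by $\mathbb{E}[L(Y)^2]<\infty$, so the task reduces to showing $n^{-1}\sum_i(U_i-\widehat U_i)^2=o_p(1)$. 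I would split
$$|\widehat U_i-U_i|\leq \sup_{t}|\widehat F_{S_i}(t)-F_{S_i}^*(t)|+\|(F_{S_i}^*)'\|_\infty\,|\widehat f(X_i,S_i)-f^*(X_i,S_i)|$$
via the bounded density in Assumption~\ref{assum: F}, and then further bound $\sup_t|\widehat F_s(t)-F_s^*(t)|\leq \sup_t|\widehat F_s(t)-F_s^{\widehat f}(t)|+\sup_t|F_s^{\widehat f}(t)-F_s^*(t)|$, where $F_s^{\widehat f}$ is the true CDF of $\widehat f(X,S)\mid S=s$ conditional on $\mathcal{D}^{est}$. The first piece is $O_p(n_s^{-1/2})$ by DKW (using independence of $\widehat f$ from $\mathcal{D}$), and the second is handled via an $\varepsilon$-buffer inclusion $\{\widehat f\leq t\}\subseteq\{f^*\leq t+\tau\}\cup\{|\widehat f-f^*|>\tau\}$, optimizing $\tau$ using the sub-Gaussian tail in Assumption~\ref{assum: f} and the bounded density of $f^*(X,S)\mid S=s$. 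Integrating the same tail bound gives $\mathbb{E}[|\widehat f(X,S)-f^*(X,S)|^2\mid\widehat f]=O(r_{n,d}^2)$, hence the sample analog is $O_p(r_{n,d}^2)$.

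Putting the three pieces together yields $\mathbb{P}g_{\widehat Q}-\mathbb{P}g_{\widetilde Q}=o_p(1)$, and then Assumption~\ref{assum:loss}(i) gives, for any $\varepsilon>0$,
$$\mathbb{P}\bigl(d(\widehat Q,\widetilde Q)\geq\varepsilon\bigr)\leq\mathbb{P}\bigl(\mathbb{P}g_{\widehat Q}-\mathbb{P}g_{\widetilde Q}\geq\lambda\varepsilon^2\bigr)\to 0,$$
proving the theorem. The main obstacle is the uniform control of $n^{-1}\sum_i(U_i-\widehat U_i)^2$: it simultaneously involves the Glivenko-Cantelli fluctuation of $\widehat F_s$ around the true CDF of $\widehat f(X,S)\mid S=s$, the shift from $F_s^{\widehat f}$ to $F_s^*$ induced by the first-step error, and the pointwise prediction error of $\widehat f$. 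The sample split between $\mathcal{D}^{est}$ and $\mathcal{D}$ is essential here, since it lets me treat $\widehat f$ as a fixed function for the empirical-process arguments on $\mathcal{D}$.
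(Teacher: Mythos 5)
Your proposal is correct and follows essentially the same route as the paper: reduce $d(\widehat Q,\widetilde Q)=o_p(1)$ to convergence of the population excess risk via the separability condition, establish a Glivenko--Cantelli property for $\{\mathcal{L}(Q(\cdot),\cdot):Q\in\mathcal{Q}\}$ via $\|\cdot\|_\infty$-covering of the Sobolev ball and Lipschitz transfer to brackets, and then control the $\widehat U_i$-for-$U_i$ substitution error by combining the Lipschitz continuity of $Q\in\mathcal{Q}$ with an empirical-CDF comparison.

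The one point where you and the paper diverge is the intermediate CDF you interpolate through. The paper compares $\widehat F_s$ to $\widetilde F_s$, the empirical CDF of the \emph{true} predicted values $\{f^*(X_i,s):S_i=s\}$, then bounds $\sup_v|\widehat F_s(v)-\widetilde F_s(v)|$ by an ad hoc coupling lemma (Lemma~\ref{lem-aux-1}, a discretized version of your buffer inclusion $\{\widehat f\leq t\}\subseteq\{f^*\leq t+\tau\}\cup\{|\widehat f-f^*|>\tau\}$) and $\sup_v|\widetilde F_s(v)-F_s^*(v)|$ by DKW. You instead pass through $F_s^{\widehat f}$, the true CDF of $\widehat f(X,S)\mid S=s$ conditional on $\mathcal{D}^{\mathrm{est}}$, applying DKW to the first gap (using the sample split to treat $\widehat f$ as deterministic) and the buffer argument to the second. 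The two routes are symmetric and rely on the same three ingredients (DKW, bounded density of $f^*(X,S)\mid S$, sub-Gaussian first-step error), so either works; yours is arguably slightly more transparent in that it makes the role of the sample split explicit, whereas the paper's version avoids needing to reason about the law of $\widehat f(X,S)$ at all. Your five-term telescoping decomposition, with the optimality inequality $\widehat{\mathbb{P}}_n g_{\widehat Q}\leq\widehat{\mathbb{P}}_n g_{\widetilde Q}$ isolated, is also a cleaner bookkeeping of the same three probabilities $I_1,I_2,I_3$ that the paper bounds. In short: same proof, slightly tidier organization, and a mirror-image choice of intermediate distribution in the key CDF estimate.
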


\begin{theorem}[Rate of convergence] \label{thm_rate}
Suppose Assumptions \ref{assum: Q}--\ref{assum: F} hold. Additionally, assume that the loss function's derivatives are bounded, i.e., $L(Y)\vee\sup_{|x|\leq R_2}|\mathcal{L}'(x,Y)|\vee |\mathcal{L}''(x,Y)|\leq B$ almost surely for some constant $B>0$. Then as $n \to \infty$,
\begin{align*}
 d(\widehat Q, \widetilde{Q}) = O_p\left(\{r_{n,d}\sqrt{\log(n)}\}^{\frac{2k+1}{2k+3}}\vee n^{-\frac{2k+1}{2(2k+3)}}\right).
\end{align*}
\end{theorem}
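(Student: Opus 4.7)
The plan is to adapt the classical localization machinery for $M$-estimators (cf.\ van der Vaart and Wellner) to this two-stage plug-in setting, balancing the oracle empirical-process modulus against the first-step plug-in error. On the consistency event established by Theorem~\ref{thm_consist}, the quadratic separability in Assumption~\ref{assum:loss}(i) together with the empirical optimality $M_n(\widehat Q)\le M_n(\widetilde Q)$, where $M_n(Q) := n^{-1}\sum_{i=1}^n \mathcal{L}(Q(\widehat U_i),Y_i)$, yields the basic inequality
$$
\lambda\, d^2(\widehat Q,\widetilde Q) \;\le\; M(\widehat Q)-M(\widetilde Q) \;\le\; (M-M_n)(\widehat Q) - (M-M_n)(\widetilde Q),
$$
so it suffices to bound this excess empirical process uniformly over a shrinking $d$-ball of radius $\delta$ around $\widetilde Q$.

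Next, I would split the excess via the intermediate oracle quantity $M_n^{\diamond}(Q) := n^{-1}\sum_i \mathcal{L}(Q(U_i),Y_i)$, where $U_i := F_{S_i}^*\circ f^*(X_i,S_i)$. The oracle piece $(M-M_n^{\diamond})(\widehat Q)-(M-M_n^{\diamond})(\widetilde Q)$ is handled by empirical-process theory: the Sobolev ball $\mathcal{Q}$ satisfies the bracketing-entropy bound $\log N_{[\,]}(\varepsilon,\mathcal{Q},\|\cdot\|_\infty)\lesssim \varepsilon^{-1/k}$, which together with the Lipschitz property in Assumption~\ref{assum:loss}(ii) and a local maximal inequality produces a modulus of order $n^{-1/2}\delta^{\,1-1/(2k)}$. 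For the plug-in piece $(M_n^{\diamond}-M_n)(\widehat Q)-(M_n^{\diamond}-M_n)(\widetilde Q)$, I would perform a second-order Taylor expansion of $\mathcal{L}(Q(\widehat U_i),Y_i)$ around $\mathcal{L}(Q(U_i),Y_i)$: the dominant linear-in-$(\widehat U_i - U_i)$ contribution, evaluated at $\widehat Q$ minus that at $\widetilde Q$, has vanishing population mean by the pointwise first-order optimality $\mathbb{E}[\mathcal{L}'(\widetilde Q(U),Y)\mid U]=0$ derived from Proposition~\ref{prop:risk-decomp-pointwise}, and the surviving quadratic-in-differences term is controlled via Gagliardo--Nirenberg-type interpolation between $\|\cdot\|_2$ and $\|\cdot\|_{\mathcal{W}^{k,\infty}}$ (permitted because $\|\widehat Q-\widetilde Q\|_{\mathcal{W}^{k,\infty}}\le 2R_1$). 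Uniform control of $\max_i|\widehat U_i - U_i|$ follows from a union bound on the sub-Gaussian tail in Assumption~\ref{assum: f}, a Dvoretzky--Kiefer--Wolfowitz-type bound on $\|\widehat F_s - F_s^*\|_\infty$, and the bounded density in Assumption~\ref{assum: F}, yielding $\max_i|\widehat U_i - U_i|=O_p(r_{n,d}\sqrt{\log n})$. Collecting these estimates produces a fixed-point inequality of the schematic form $\lambda\delta^2 \lesssim n^{-1/2}\delta^{\,1-1/(2k)} + r_{n,d}\sqrt{\log n}\cdot \delta^{\beta}$ for a $k$-dependent exponent $\beta<2$, whose solution in $\delta$ yields the claimed rate.

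The main obstacle is extracting the precise exponent $(2k+1)/(2k+3)$ on $r_{n,d}\sqrt{\log n}$. A naive plug-in bound using only Lipschitz $\mathcal{L}$ and Lipschitz $Q\in\mathcal{Q}$ yields the linear order $r_{n,d}\sqrt{\log n}$ and hence only the crude rate $(r_{n,d}\sqrt{\log n})^{1/2}$. The sharper exponent requires (i) activating the first-order optimality of $\widetilde Q$ to cancel the dominant linear-in-$(\widehat U_i - U_i)$ Taylor term, and (ii) deploying interpolation estimates expressing derivative gaps $\|\widehat Q^{(j)} - \widetilde Q^{(j)}\|$ as fractional powers of $d(\widehat Q,\widetilde Q)$ that mesh with the Sobolev entropy exponent $1/k$. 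Solving these estimates simultaneously with the oracle modulus inside the fixed-point inequality is the technical heart of the proof; in particular, the two summands in the stated rate correspond to the regimes in which either the plug-in term or the oracle term dominates the right-hand side of that fixed-point inequality.
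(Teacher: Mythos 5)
Your overall architecture --- the basic inequality from Assumption~\ref{assum:loss}(i) plus empirical optimality, the split of the excess into an oracle empirical-process term and a plug-in term, the Sobolev bracketing entropy for the oracle term, Gagliardo--Nirenberg interpolation, and a peeling/fixed-point resolution --- matches the paper's proof. You also correctly identify that the fractional exponent $(2k+1)/(2k+3)$ arises from trading interpolation bounds on derivative gaps of $Q-\widetilde{Q}$ against the quadratic separability.

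There is, however, a genuine gap in your treatment of the plug-in term. You propose to cancel the dominant linear-in-$(\widehat U_i-U_i)$ Taylor contribution using the pointwise first-order optimality $\mathbb{E}[\mathcal{L}'(\widetilde{Q}(U),Y)\mid U]=0$. That identity does hold as a consequence of Proposition~\ref{prop:risk-decomp-pointwise}, but it holds only after averaging over $S$: $\mathbb{E}[\mathcal{L}'(\widetilde{Q}(U),Y)\mid U=u]=\sum_s r_s\,\partial_q\mathcal{C}(Q_s^*(u),\widetilde{Q}(u))=0$, whereas the per-group quantities $\partial_q\mathcal{C}(Q_s^*(u),\widetilde{Q}(u))=\mathbb{E}[\mathcal{L}'(\widetilde{Q}(U),Y)\mid U=u,S=s]$ are not individually zero. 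When you condition on the sample except the $Y_i$'s (the step required to extract a mean), $S_i$ is fixed, so the conditional mean of $\mathcal{L}'(\widetilde{Q}(U_i),Y_i)$ is $\partial_q\mathcal{C}(Q_{S_i}^*(U_i),\widetilde{Q}(U_i))$, which does not vanish. Since the multiplier $\widehat U_i-U_i=\widehat{F}_{S_i}\circ\widehat{f}(X_i,S_i)-F_{S_i}^*\circ f^*(X_i,S_i)$ differs across groups $s$ (through $\widehat F_s$ and the group-specific first-step error), the across-$s$ cancellation implicit in $\sum_s r_s\,\partial_q\mathcal{C}=0$ does not materialize in the sum. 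The paper never attempts this cancellation: it bounds $|I_{1,i}(Q)|\le B\,|Q^{(1)}(\breve U_i)-\widetilde{Q}^{(1)}(\breve U_i)|\,|\widehat U_i-U_i|$ in absolute value, then applies Cauchy--Schwarz, Lemma~\ref{lem-uhat} for $\mathbb{E}|\widehat U-U|^2$, and the interpolation $\|Q^{(1)}-\widetilde{Q}^{(1)}\|_2\lesssim d(Q,\widetilde{Q})^{(2k-1)/(2k+1)}$ to the linear term itself. Balancing $\lambda\delta^2\lesssim r_{n,d}\sqrt{\log n}\,\delta^{(2k-1)/(2k+1)}$ is precisely what produces the exponent $(2k+1)/(2k+3)$. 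A sanity check: if your cancellation were legitimate, the surviving variance-level term would give a strictly faster rate, on the order of $\{n^{-1/2}r_{n,d}\sqrt{\log n}\}^{(2k+1)/(2k+3)}$, not the one in the theorem --- a sign that the linear term has been dropped illegitimately. To repair the argument, abandon the cancellation, keep the linear term, and apply the interpolation bound to it directly, exactly as in the paper's handling of $I_{1,i}$.
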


\begin{remark}\label{rm1}
{\rm
If the first-step estimator $\widehat f(\cdot,s)$ satisfies the moment condition:
	\begin{align}\label{as-moment}
		\sup_{s=0,1} \mathbb{E}\left[\left\vert
		\widehat f(X,S) - f^{\ast}(X,S) \right\vert^{2m} \Big|S=s\right]= O_p( r_{n,d}^{2m})
	\end{align}
    for $r_{n,d}\rightarrow 0$ and some $m\geq 1$, then by Lemma \ref{lem-uhat} and a slight modification of the proof of Theorem \ref{thm_rate}, the convergence rate becomes:
    \begin{align*}
    d(\widehat Q, \widetilde{Q}) = O_p\left(r_{n,d}^{\frac{m(2k+1)}{(m+1)(2k+3)}}\vee n^{-\frac{2k+1}{2(2k+3)}}\right).
    \end{align*}
}
\end{remark}

Finally, we translate the convergence results for the quantile function $\widehat{Q}$ into a bound on the estimation error of the final fair predictor. Define the $L_2$ estimation error metric:
$$
d(\widehat{f}_{\text{fair}}, f^*_{\text{fair}}):= \sqrt{\mathbb{E}_{X, S}\left[\left(\widehat{f}_{\text{fair}}(X, S) - f^*_{\text{fair}}(X , S) \right)^2\right]}.
$$

\begin{theorem}[Convergence Rate of the Fair Predictor]\label{thm_ffair}
Suppose Assumptions \ref{assum: Q}--\ref{assum: F} hold. Additionally, assume that the loss function derivatives are bounded almost surely, i.e., $L(Y)\vee\sup_{|x|\leq R_2}|\mathcal{L}'(x,Y)|\vee |\mathcal{L}''(x,Y)|\leq B$ for some constant $B>0$. Then, as $n \to \infty$:
$$
d(\widehat{f}_{\text{fair}}, f^*_{\text{fair}}) = O_p\left(\{r_{n,d}\sqrt{\log(n)}\}^{\frac{2k+1}{2k+3}}\vee n^{-\frac{2k+1}{2(2k+3)}}\right).
$$
Furthermore, under the moment conditions specified in Remark \ref{rm1}, the rate is
$$
d(\widehat{f}_{\text{fair}}, f^*_{\text{fair}}) = O_p\left(r_{n,d}^{\frac{m(2k+1)}{(m+1)(2k+3)}}\vee n^{-\frac{2k+1}{2(2k+3)}}\right).
$$
\end{theorem}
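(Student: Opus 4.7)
The plan is to reduce the fair-predictor error to two manageable pieces by inserting the ``oracle'' probability integral transform $U^* := F_S^*\circ f^*(X,S)$. Writing $\widehat{U} := \widehat{F}_S\circ \widehat{f}(X,S)$, so that $\widehat f_{\text{fair}}(X,S)=\widehat Q(\widehat U)$ and $f^*_{\text{fair}}(X,S)=\widetilde Q(U^*)$ by definition, the triangle inequality yields
\begin{align*}
d(\widehat{f}_{\text{fair}}, f^*_{\text{fair}}) \le \bigl\|\widehat Q(\widehat U)-\widehat Q(U^*)\bigr\|_{2} + \bigl\|\widehat Q(U^*)-\widetilde Q(U^*)\bigr\|_{2},
\end{align*}
where $\|\cdot\|_2$ denotes the $L^2$ norm under the joint law of the test point $(X,S)$. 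This cleanly separates a plug-in/transform error from the rate on $\|\widehat Q-\widetilde Q\|_{L^2([0,1])}$ already supplied by Theorem~\ref{thm_rate}.

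The second summand I would handle directly: since $F_s^*$ is the CDF of $f^*(X,s)\mid S=s$, which is non-atomic by Assumption~\ref{assum:cost_function2}, $U^*\mid S{=}s \sim \mathrm{Uniform}[0,1]$ for each $s$, so marginally $U^*\sim \mathrm{Uniform}[0,1]$. Hence
$\|\widehat Q(U^*)-\widetilde Q(U^*)\|_2^2 = \int_0^1 (\widehat Q(u)-\widetilde Q(u))^2\,du = d^2(\widehat Q,\widetilde Q)$,
and Theorem~\ref{thm_rate} gives the bound $O_p(\{r_{n,d}\sqrt{\log n}\}^{(2k+1)/(2k+3)}\vee n^{-(2k+1)/(2(2k+3))})$.

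For the first summand, I would use that $\widehat Q\in\mathcal Q\subset \mathcal W^{k,\infty}([0,1])$ has $\|\widehat Q'\|_\infty\le R_1$, so $|\widehat Q(\widehat U)-\widehat Q(U^*)|\le R_1|\widehat U-U^*|$, reducing the task to controlling $\|\widehat U-U^*\|_2$. Using
$|\widehat U-U^*|\le \sup_t|\widehat F_S(t)-F_S^*(t)| + |F_S^*(\widehat f(X,S))-F_S^*(f^*(X,S))|$,
the second piece is at most $M|\widehat f(X,S)-f^*(X,S)|$ by the bounded density in Assumption~\ref{assum: F}, whose $L^2$ norm integrates to $O_p(r_{n,d})$ under Assumption~\ref{assum: f}. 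For the first piece I split $\widehat F_s - F_s^* = (\widehat F_s - F_s^{\widehat f}) + (F_s^{\widehat f} - F_s^*)$, where $F_s^{\widehat f}$ is the conditional CDF of $\widehat f(X,s)\mid S=s$ given $\mathcal D^{est}$; the DKW inequality (conditional on $\mathcal D^{est}$) handles the first part as $O_p(n_s^{-1/2})$, while the sandwich $|F_s^{\widehat f}(t)-F_s^*(t)|\le M\tau + \mathbb{P}(|\widehat f-f^*|>\tau\mid S=s)$ combined with Assumption~\ref{assum: f} and the choice $\tau\asymp r_{n,d}\sqrt{\log n}$ delivers $O_p(r_{n,d}\sqrt{\log n})$ for the second. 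Altogether, $\|\widehat Q(\widehat U)-\widehat Q(U^*)\|_2=O_p(r_{n,d}\sqrt{\log n})$, which is strictly of smaller order than the second-summand rate because the exponent $(2k+1)/(2k+3)\in(0,1)$ makes $\{r_{n,d}\sqrt{\log n}\}^{(2k+1)/(2k+3)}$ dominant as $r_{n,d}\sqrt{\log n}\to 0$.

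Under the moment condition~\eqref{as-moment}, I would replace the sub-Gaussian step by Markov's inequality, $\mathbb{P}(|\widehat f-f^*|>\tau\mid S=s)\le\tau^{-2m}O_p(r_{n,d}^{2m})$, and rebalance in $\tau$ to obtain a polynomially faster transform-error bound that is again absorbed by the rate from Remark~\ref{rm1}. The main technical obstacle is the uniform comparison $\sup_t|\widehat F_s(t)-F_s^*(t)|$, which requires blending DKW on the randomness of the training $X_i$'s with the tail/moment control on $\widehat f-f^*$ and hinges on the bounded-density property in Assumption~\ref{assum: F}; the remainder is essentially Lipschitz bookkeeping plus one invocation of Theorem~\ref{thm_rate}.
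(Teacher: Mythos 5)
Your proof is correct and follows essentially the same route as the paper: insert the oracle transform $U^*=F_S^*\circ f^*(X,S)$, split into a Lipschitz-controlled transform error $R_1\|\widehat U-U^*\|_2$ and the quantile error $d(\widehat Q,\widetilde Q)$, invoke Theorem~\ref{thm_rate} on the latter, and observe the transform error is dominated because the exponent $(2k+1)/(2k+3)<1$. The only cosmetic difference lies in how you re-derive the content of Lemma~\ref{lem-uhat}: you split $\widehat F_s-F_s^*$ through the population CDF of $\widehat f(X,s)\mid S=s$ and apply DKW conditionally on $\mathcal D^{\mathrm{est}}$, whereas the paper routes through the empirical CDF $\widetilde F_s$ of $\{f^*(X_i,S_i):S_i=s\}$ and the elementary comparison Lemma~\ref{lem-aux-1}; both deliver the same $O_p(r_{n,d}\sqrt{\log n}+n^{-1/2})$ transform-error bound.
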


\section{Numerical experiments}\label{sec:exp}
We evaluate the empirical performance of our proposed framework using both real-world benchmark datasets and synthetic simulations. To demonstrate the versatility of our approach, we apply our method to three distinct regression tasks: quantile regression with pinball loss and Poisson regression with negative log-likelihood (Section \ref{sec:real_data}), followed by a robust regression simulation using Huber loss (Section \ref{sec:simulation}).
We remark that while our theoretical analysis assumes the latent predictor is estimated on a separate split to decouple estimation error from the fairness adjustment, in our numerical experiments we perform both estimation and calibration on the same training data. This approach is adopted to maximize sample efficiency, a common practice in finite-sample evaluation, though it technically introduces a dependency that is absent in the theoretical limit.

Throughout these studies, we quantify demographic disparity using the empirical Kolmogorov–Smirnov (KS) distance. For a predictor $\eta$ and a dataset $\{(X_i,S_i,Y_i)\}$, this metric is defined as:
\begin{equation*}
\operatorname{KS}_n(\eta)= \sup _{t \in \mathbb{R}}\left|\frac{1}{n_0} \sum_{i: S_i=0} \mathbf{1}\{\eta(X_i,S_i) \leq t\}-\frac{1}{n_1} \sum_{i:S_i=1} \mathbf{1}\{\eta(X_i,S_i) \leq t\}\right|,
\end{equation*}
where $n_s$ denotes the number of samples in the protected group $S=s$ for $s \in \{0,1\}$.

\subsection{Benchmark data applications}\label{sec:real_data}
\subsubsection{Quantile regression on CRIME dataset}
Pinball loss is the standard objective for quantile regression. It is defined as
$\mathcal{L}(\eta(X,S),Y)=\rho_\tau(Y-\eta(X,S)),$
where $\rho_\tau(u)=|u|(\tau\mathbf{1}\{u\geq0\}+ (1-\tau)\mathbf{1}\{u<0\})$, and $\tau \in (0,1)$ is the desired quantile level. To evaluate performance across the distribution, we conduct experiments setting $\tau \in \{0.25, 0.5, 0.75\}$.

\textbf{Implementation Details.}
We implement two versions of our approach corresponding to different choices of the function class $\widetilde{\mathcal{Q}}$ in the second step (Equation \eqref{eq-est-Q}):
\begin{enumerate}
    \item \textbf{Isotonic regression:} We set $\widetilde{\mathcal{Q}}$ as the class of all monotone functions on $[0,1]$ and solve the optimization using PAVA.
    \item \textbf{I-Spline regression:} We set $\widetilde{\mathcal{Q}}$ as the class of linear combinations of I-splines \citep{ramsay1988monotone} with non-negative coefficients, yielding smooth non-decreasing functions. We optimize the coefficients using the Hooke-Jeeves (HJ) algorithm \citep{moser2009hooke} since the pinball loss is non-differentiable.
\end{enumerate}
For the I-spline approach, we select the spline degree and number of interior knots via $K$-fold cross-validation (CV), with $K=5$, where $n_{\min}$ is the sample size of the smallest subgroup. We search candidate degrees in $\{1,\dots,10\}$ and interior knot counts in $\{0,\dots,10\}$ (spaced evenly in $(0,1)$). To select the optimal configuration, we first filter for degree-knot combinations yielding a KS distance within the lowest $10\%$ of all candidates, and among those, we select the one minimizing the regression risk (pinball loss).

We compare our method against standard Quantile Regression (QR) and the fair quantile regression approach proposed by \cite{liu2022conformalized}. The latter acts as a post-processing step that finds the fair function closest (in $L_2$ distance) to the original QR estimator. In contrast, our approach directly minimizes the pinball loss under the fairness constraint.

\textbf{Dataset.} We use the \textbf{Community \& Crime (CRIME)} dataset \citep{redmond2002data}, which contains 97 features describing socio-economic conditions of 1,994 U.S. communities. Following \cite{liu2022conformalized}, the objective is to predict the number of violent crimes per 100,000 population. The sensitive attribute is race, classified into four categories based on the predominant race in the community.

\textbf{Results.}
We split the data into 80\% training and 20\% testing sets, repeating the procedure 100 times. The mean and standard error of the empirical risk and KS distance are reported in Figure \ref{fig:crime3}.
Our method outperforms the competitor in terms of minimizing the pinball loss while maintaining low KS distance. This performance advantage stems from our direct optimization objective: we seek the risk minimizer under the constraint, whereas \cite{liu2022conformalized} optimize a proxy ($L_2$ distance to the unfair predictor). Furthermore, our method is more data-efficient as it estimates a single common quantile function, whereas the competing method requires estimating separate quantile functions for each sensitive group, which can be unstable for small subgroups. 

\begin{figure}[htbp]
  \centering
    \includegraphics[width=0.9\textwidth]{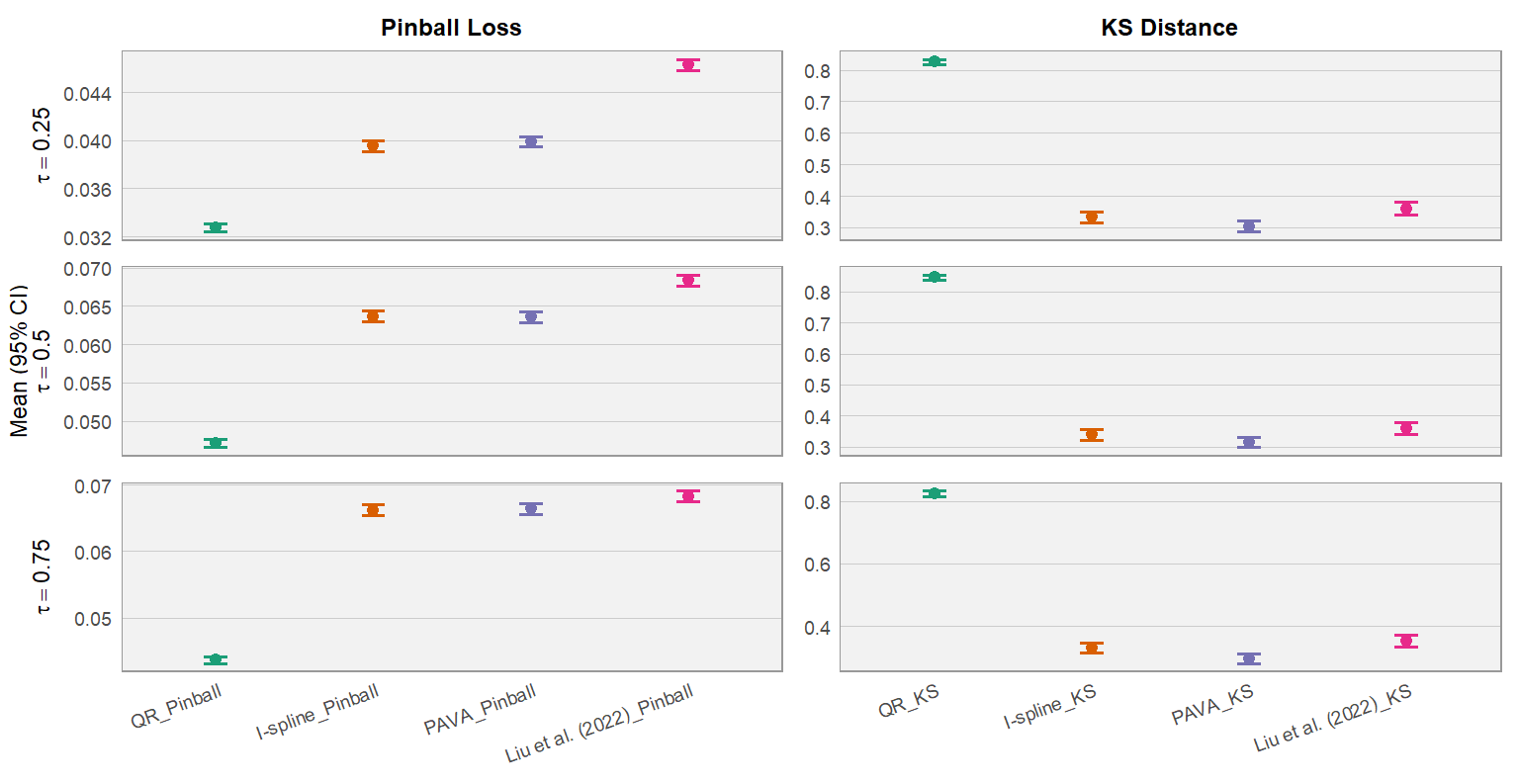}
  \caption{Pinball loss and KS distance comparison for $\tau=0.25,0.5,0.75$ on the CRIME dataset. $95\%$ confidence intervals for the mean are plotted. QR represents standard quantile regression.}
  \label{fig:crime3}
\end{figure}




\subsubsection{Poisson Regression on HRS Dataset}
In Poisson regression, we minimize the negative log-likelihood: $\mathcal{L}(\eta(X,S),Y) = e^{\eta(X,S)} - Y\eta(X,S)$.
We apply the same two-step framework. For the PAVA implementation, we utilize a generalized version \citep{leeuw2009isotone} adapted for the Poisson loss. To address numerical instability where PAVA may produce negative infinity estimates when $Y_i=0$, we replace such values with the smallest finite estimate in the training set.

\textbf{Dataset.} We use the \textbf{Health \& Retirement Survey (HRS)} dataset, a longitudinal panel study from the University of Michigan that collects responses related to health and aging. The primary objective is to predict the number of dependencies individuals have in their daily activities, which is a count variable. We utilize the pre-processed version of the data from the \texttt{fairml} package \citep{fairmlpac}. After removing observations with missing data, the dataset comprises 12,766 instances. We utilize 26 patient-level predictors, including age, net worth, body mass index (BMI), and years of education. Marital status serves as the sensitive variable, categorized into two groups: `married/partnered' and `not married.'

\textbf{Results.}
Using the same 80/20 split over 100 repetitions, we compare our method against standard Poisson regression. As shown in Figure \ref{fig:hrsplot}, our approach achieves a superior balance of prediction accuracy (negative log-likelihood) and fairness (KS distance), demonstrating its effectiveness for count data.

\begin{figure}[htbp]
  \centering
    \includegraphics[width=0.8\textwidth]{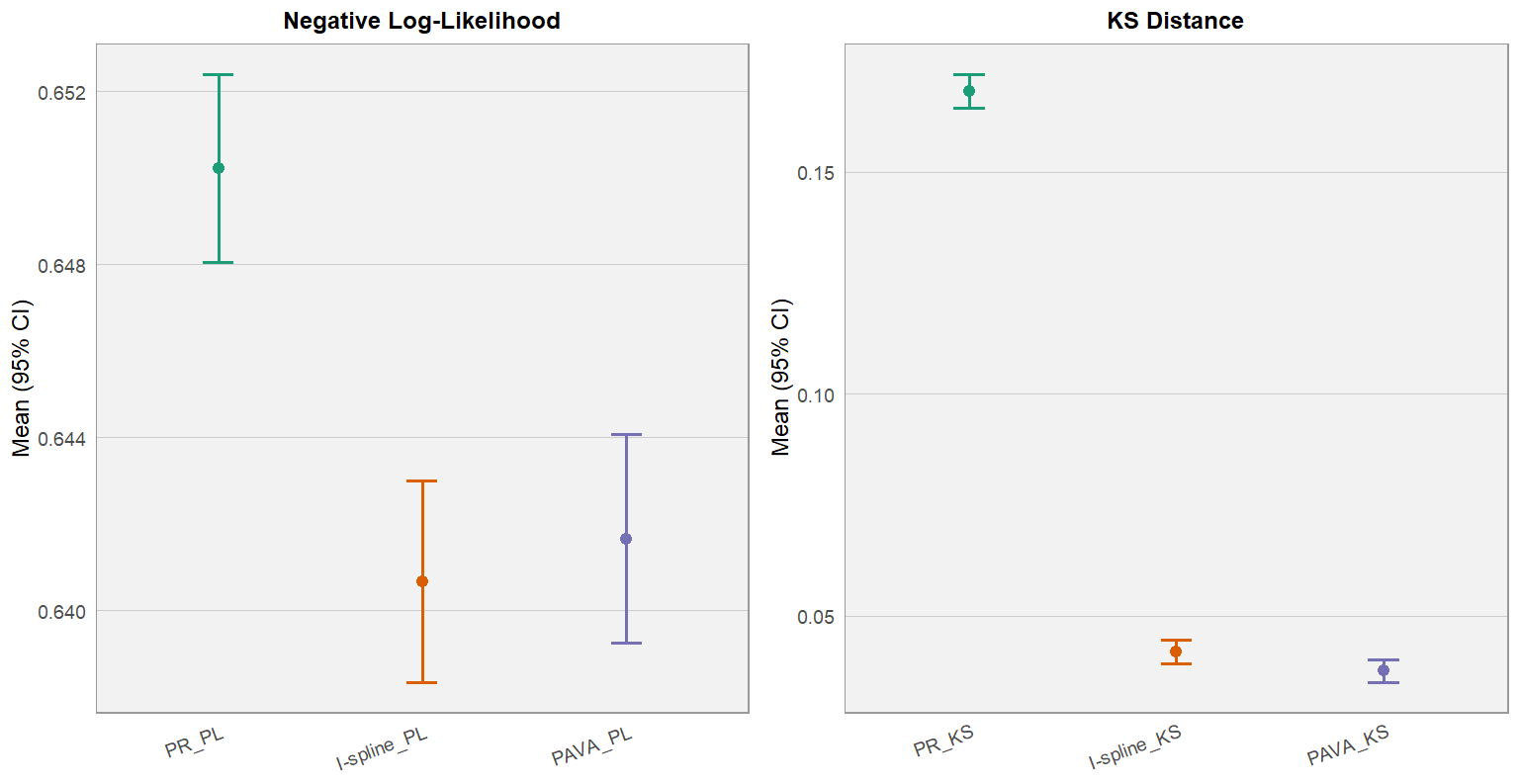}
  \caption{Comparison of methods on the Health \& Retirement Survey Dataset using Poisson regression. $95\%$ confidence intervals for the mean are plotted. PR denotes standard Poisson regression.}
  \label{fig:hrsplot}
\end{figure}

\subsection{Simulation Study: Robust Regression}\label{sec:simulation}

We further validate our algorithm using a robust regression task with the Huber loss. Following \cite{lambert2011robust}, we generate synthetic data according to:
\[
Y_i \;=\; 1 \;+\; X_i^\top \beta \;+\; a S_i \;+\; \sigma \,\varepsilon_i,
\]
where $X_i \sim \mathcal{N}_8(\mathbf{0}, \Sigma)$ with covariance $(\Sigma)_{ij}=0.5^{|i-j|}$. The coefficient vector is $\beta=(3,\;1.5,\;0,\;0,\;2,\;0,\;0,\;0)^\top$ and $\sigma=9.67$.
To introduce heavy tails and outliers, the error term $\varepsilon_i$ is drawn from a standardized Gaussian mixture:
\[
V_i \sim 0.9\,\mathcal{N}(0,1) \;+\; 0.1\,\mathcal{N}(0,225), \quad \varepsilon_i = \frac{V_i}{\sqrt{\mathrm{Var}(V_i)}}.
\]
The sensitive variable is $S_i\sim\mathrm{Bernoulli}(0.5)$. We test two levels of bias by setting $a=1$ and $a=5$.
The training set size is $n=500$, and the test set contains 10,000 observations (balanced between $S=0$ and $S=1$).
We minimize the Huber loss with $M = 1.345\sigma \approx 13.01$, a choice that balances robustness and efficiency for Gaussian errors \citep{lambert2011robust, huber1981robust}.

\textbf{Results.}
We apply our two-step procedure (using both isotonic regression and I-spline with CV) and compare it to standard robust regression (RR). The results over 200 repetitions are displayed in Figure~\ref{fig:RRa=1} ($a=1$) and Figure~\ref{fig:RRa=5} ($a=5$).
Our method effectively reduces unfairness regardless of the initial disparity magnitude. Notably, as the inherent unfairness in the data increases (larger $a$), the trade-off becomes more pronounced: the fair regression risk deviates more significantly from the unconstrained risk, reflecting the necessary cost of enforcing parity in a highly biased system.

\begin{figure}[htbp]
  \centering
    \includegraphics[width=0.8\textwidth]{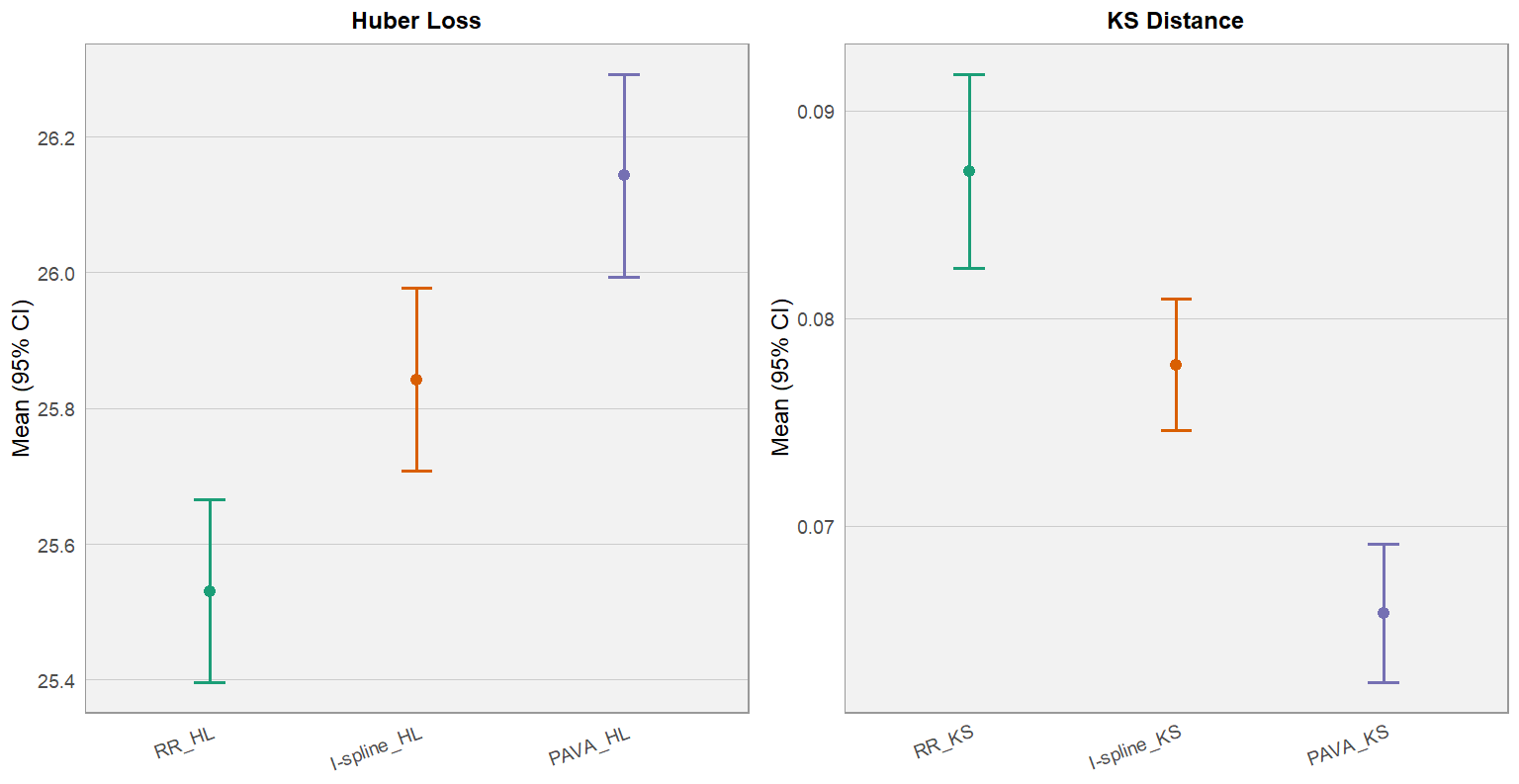}
  \caption{Huber loss and KS distance for robust regression simulation with moderate bias ($a = 1$). Approximated $95\%$ CIs are plotted. RR denotes standard robust regression.}
  \label{fig:RRa=1}
\end{figure}

\begin{figure}[htbp]
  \centering
    \includegraphics[width=0.8\textwidth]{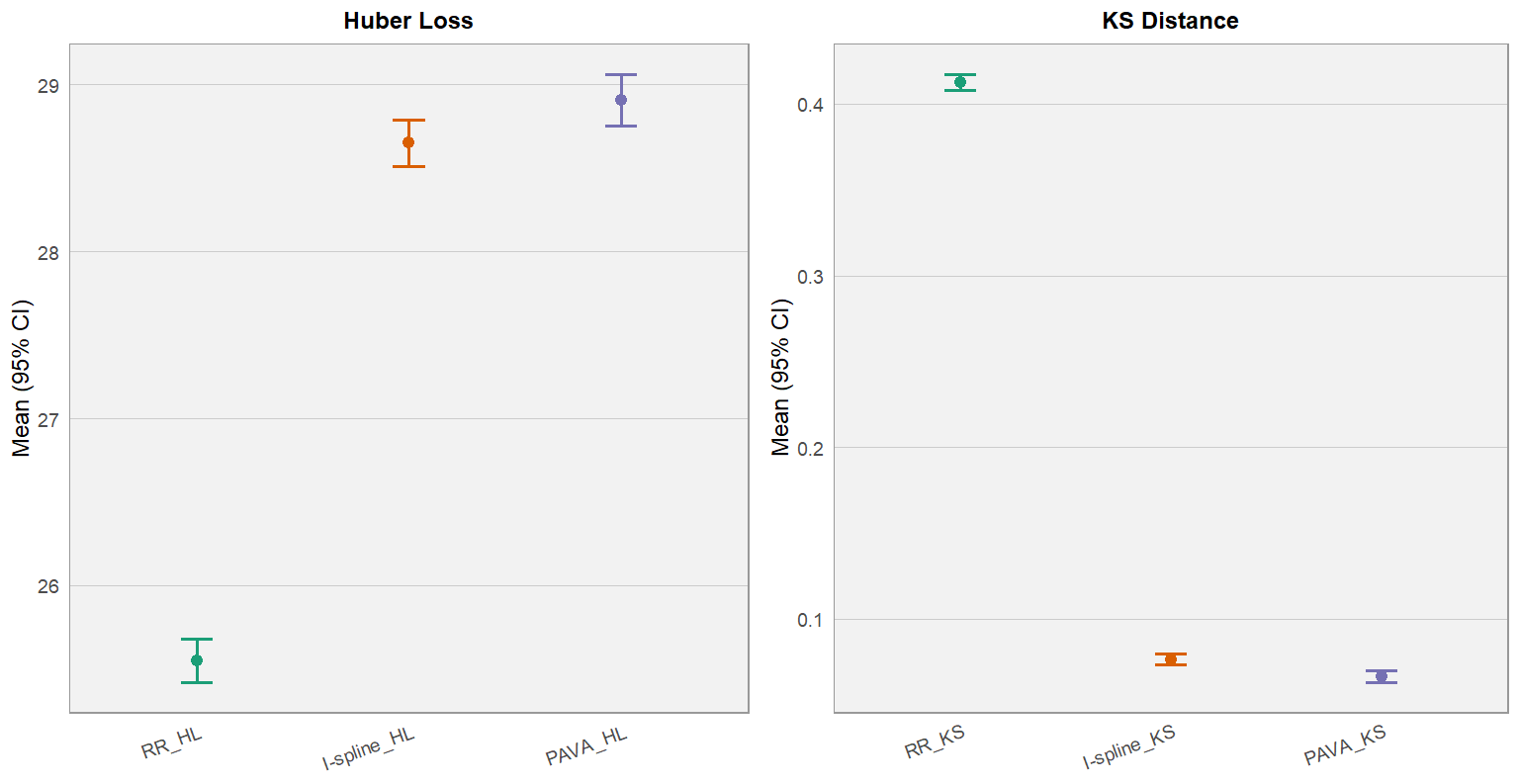}
  \caption{Huber loss and KS distance for robust regression simulation with high bias ($a = 5$). Approximated $95\%$ CIs are plotted. RR denotes standard robust regression.}
  \label{fig:RRa=5}
\end{figure}

\section{Discussion}\label{sec:discuss}
In this paper, we introduced a unified framework for fair regression under demographic parity, leveraging a novel characterization of the optimal fair risk minimizer. We conclude by highlighting several promising avenues for future research.

First, our current theoretical framework assumes that the latent regression function $f^*$ is scalar-valued. However, many complex applications involve multivariate latent structures. For instance, consider a location-scale model $Y=g^*(X,S)+\sigma^*(X,S)\varepsilon$, where the latent parameter is the vector $f^*=(g^*,\sigma^*) \in \mathbb{R}^2$. In such settings, where both the conditional mean and variance depend on sensitive attributes, the univariate characterization in Proposition \ref{prop2} does not directly apply. Developing alternative characterizations for these multivariate settings remains an open and non-trivial problem.

Second, while this work focuses on demographic parity, it would be of significant interest to investigate whether the connection between fair regression and the Kantorovich barycenter problem extends to other fairness criteria. Adapting our optimal transport perspective to constraints such as equal opportunity or equalized odds could provide similar unified frameworks for a broader class of fair learning tasks.

\bibliographystyle{chicago}
\bibliography{references}



\appendix

\section{Appendix} 
\subsection{Technical details}\label{sec:proof}

\begin{proof}[Proof of Proposition \ref{prop1}]
We first note that
	\begin{align*}
	F^*:=&\min_{f: f(X^0,0)\overset{d}{=} f(X^1,1)}\mathbb{E}[\mathcal{L}(f(X,S),Y)]\\
 =&\min_{f: f(X^0,0)\overset{d}{=} f(X^1,1)}\mathbb{E}[\mathbb{E}[\mathcal{L}(f(X,S),Y)|X,S]]
		\\=&\min_{f: f(X^0,0)\overset{d}{=} f(X^1,1)}\mathbb{E}[\mathcal{C}(f^*(X,S),f(X,S))]
		\\=&\min_{f: f(X^0,0)\overset{d}{=} f(X^1,1)}\sum^{1}_{s=0}r_s\mathbb{E}[\mathcal{C}(f^*(X,S),f(X,S))|S=s].
	\end{align*}
Recall the KBP defined in (\ref{eq-kbp}) and denote by $\bar{f}$ and $\bar{\nu}$ the solutions to the FRP and KBP, respectively. Clearly, we have
	\begin{align*}
		K^*\leq  \sum_{s=0}^1 r_s\inf_{\gamma\sim \Pi(\nu_{f^*|s},\nu_{\bar{f}}) }\mathbb{E}_{(U^*,U)\sim \gamma}[\mathcal{C}(U^*,U)] \leq \sum^{1}_{s=0}r_s\mathbb{E}[\mathcal{C}(f^*(X,S),\bar{f}(X,S))|S=s]=F^*,  
	\end{align*}
	where $\nu_{\bar{f}}$ represents the distribution of $\bar{f}(X,S)$ which is independent of $S$ by the fairness constraint. On the other hand, due to the existence of the transport maps, we have
	\begin{align*}
		K^*=&\sum_{s=0}^1 r_s\inf_{\gamma\sim \Pi(\nu_{f^*|s},\bar{\nu}) }\mathbb{E}_{(U^*,U)\sim \gamma}[\mathcal{C}(U^*,U)] 
		=\sum_{s=0}^1 r_s\mathbb{E}[\mathcal{C}(f^*(X,S),T_S\circ f^*(X,S))|S=s], 
	\end{align*}
	where $T_S\circ f^*(X,S) \sim \bar{\nu}$, which does not depend on the value of $S$. As $T_S\circ f^*(X,S)$ is fair, we must have 
	$$K^*=\sum_{s=0}^1 r_s\mathbb{E}[\mathcal{C}(f^*(X,S),T_S\circ f^*(X,S))|S=s]\geq F^*.$$ Therefore, we have
$K^*=F^*$ and $\bar{f}(X,S)=T_{S}\circ f^*(X,S)$. 
\end{proof}

\begin{proof}[Proof of Proposition \ref{prop2}]
	{\rm 
	 By Theorem 1.38 of \cite{santambrogio2015optimal}, the support of any optimal solution $\gamma$ to the Kantorovich problem is a $\mathcal{C}$-cyclically monotone set. This means that for any $(x,y),(x',y')$ belonging to the support of $\gamma$, we have
		\begin{align}\label{eq-c-1}
			\mathcal{C}(x,y)+\mathcal{C}(x',y')\leq \mathcal{C}(x,y')+\mathcal{C}(x',y).    
		\end{align}
		Based on this, we aim to show that $x<x'$ implies $y\leq y'$, which will allow us to conclude from Lemma 2.8 of \cite{santambrogio2015optimal}. We prove this using contradiction. 
		Let $\delta_x=x'-x$ and $\delta_y=y-y'$. Assuming that $y>y'$, we have $\delta_x,\delta_y>0.$ Then, (\ref{eq-c-1}) can be rewritten as
		\begin{align*}
			\mathcal{C}(x,y)+\mathcal{C}(x+\delta_x,y-\delta_y)-\mathcal{C}(x,y-\delta_y)-\mathcal{C}(x+\delta_x,y)\leq 0.    
		\end{align*}
		Let $c$ be the boundary of the rectangle with the vertices $(x,y-\delta_y),(x+\delta_x,y-\delta_y),(x+\delta_x,y)$ and $(x,y)$ (the curve $c$ is oriented counterclockwise). Note that
		\begin{align*}
			\oint_c \frac{\partial \mathcal{C}(u,v)}{\partial u} du
			=&\int^{x+\delta_x}_{x}\frac{\partial \mathcal{C}(u,y-\delta_y)}{\partial u} du-\int^{x+\delta_x}_{x}\frac{\partial \mathcal{C}(u,y)}{\partial u} du
			\\=&\mathcal{C}(x+\delta_x,y-\delta_y)-\mathcal{C}(x,y-\delta_y)-\{\mathcal{C}(x+\delta_x,y)-\mathcal{C}(x,y)\}\leq 0.
		\end{align*}
		By Green's Theorem, 
		\begin{align*}
			\oint_c \frac{\partial \mathcal{C}(u,v)}{\partial u} du=-\int\int_R \frac{\partial^2 \mathcal{C}(u,v)}{\partial u\partial v} dudv> 0,    
		\end{align*}
		where $R$ denotes the rectangular region defined above with the boundary $c$ and we have used Condition (\ref{con-twice}) to get the inequality. The contradiction suggests that $\delta_y\leq 0$, i.e., $y\leq y'.$
	}    
\end{proof}

\begin{proof}[Proof of Proposition \ref{prop:risk-decomp-pointwise}]
By the law of iterated expectation and the definition of $\mathcal{C}$ in \eqref{eq-cost},
\begin{align*}
\mathbb{E}\!\left[\mathcal{L}(Q(U),Y)\right]
&=\mathbb{E}\!\left[\mathbb{E}\!\left\{\mathcal{L}\!\left(Q\!\circ F_S^*\!\circ f^*(X,S),Y\right)\middle|X,S\right\}\right]\\
&=\mathbb{E}\!\left[\mathcal{C}\!\left(f^*(X,S),\,Q\!\circ F_S^*\!\circ f^*(X,S)\right)\right].
\end{align*}
Using $U:=F_S^*\circ f^*(X,S)$ and $f^*(X,S)=Q_S^*(U)$, this becomes
\[
\mathbb{E}\!\left[\mathcal{C}\!\left(Q_S^*(U),\,Q(U)\right)\right].
\]
Since $U\mid S=s\sim\mathrm{Unif}[0,1]$ for $s=0,1$, we have $U\perp S$ and hence $\mathbb{P}(S=s\mid U)=r_s$.
Therefore,
\[
\mathbb{E}\!\left[\mathcal{C}\!\left(Q_S^*(U),\,Q(U)\right)\right]
=\mathbb{E}_{U}\!\left[\sum_{s=0}^{1} r_s\,\mathcal{C}\!\left(Q_s^*(U),\,Q(U)\right)\right],
\]
which proves \eqref{eq:risk-decomp}.
For \eqref{eq:Q-pointwise}, define $h_u(q):=\sum_{s=0}^{1} r_s\,\mathcal{C}\!\left(Q_s^*(u),q\right)$ so that
\eqref{eq:risk-decomp} rewrites the population objective as $\mathbb{E}_{U}[h_U(Q(U))]$.
Since $h_u$ depends on $Q$ only through the single value $Q(u)$, any minimizer $\widetilde{Q}$ must minimize $h_u(\cdot)$
pointwise for a.e.\ $u$, yielding \eqref{eq:Q-pointwise}.
\end{proof}

\begin{lemma}\label{lem-aux-1}
For two sets of random variables $\{u_i\}^{n}_{i=1}$ and $\{v_i\}^{n}_{i=1}$, 
let $F_{n,u}(t)$ and $F_{n,v}(t)$ be the corresponding empirical distribution functions. We have 
\begin{align*}
\sup_t\left|F_{n,u}(t)-F_{n,v}(t)\right|\leq \frac{1}{ n}\sum^{n}_{i=1}\mathbf{1}\{|u_i-v_i|>\tau\} + \sup_t \{F_{n,u}(t+\tau) - F_{n,u}(t-\tau)\},    
\end{align*}
for any $\tau>0.$
\end{lemma}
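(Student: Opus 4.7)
The plan is to compare $F_{n,u}$ and $F_{n,v}$ pointwise via the standard indicator decomposition, separating the indices where $u_i$ and $v_i$ are close (within $\tau$) from those where they are not. The ``close'' indices are handled by a deterministic shift of the argument, while the ``far'' indices contribute at most the empirical frequency $\frac{1}{n}\sum_{i=1}^n \mathbf{1}\{|u_i - v_i| > \tau\}$.

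Concretely, for each $i$, if $|u_i - v_i| \leq \tau$ and $v_i \leq t$, then $u_i \leq t + \tau$; thus
\[
\mathbf{1}\{v_i \leq t\} \leq \mathbf{1}\{u_i \leq t + \tau\} + \mathbf{1}\{|u_i - v_i| > \tau\}.
\]
Averaging over $i$ and taking the supremum over $t$ yields the upper direction
\[
F_{n,v}(t) - F_{n,u}(t) \leq \bigl(F_{n,u}(t+\tau) - F_{n,u}(t)\bigr) + \frac{1}{n}\sum_{i=1}^n \mathbf{1}\{|u_i - v_i| > \tau\}.
\]
Symmetrically, using $\mathbf{1}\{u_i \leq t - \tau\} \leq \mathbf{1}\{v_i \leq t\} + \mathbf{1}\{|u_i - v_i| > \tau\}$ gives
\[
F_{n,u}(t) - F_{n,v}(t) \leq \bigl(F_{n,u}(t) - F_{n,u}(t-\tau)\bigr) + \frac{1}{n}\sum_{i=1}^n \mathbf{1}\{|u_i - v_i| > \tau\}.
\]

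Finally, since $F_{n,u}$ is monotone, both $F_{n,u}(t+\tau) - F_{n,u}(t)$ and $F_{n,u}(t) - F_{n,u}(t-\tau)$ are bounded by $F_{n,u}(t+\tau) - F_{n,u}(t-\tau)$, and hence by $\sup_t \{F_{n,u}(t+\tau) - F_{n,u}(t-\tau)\}$. Taking the supremum over $t$ in the absolute difference then delivers the claimed bound. There is no real obstacle in this argument; the only point requiring care is ensuring the ``far'' indicator correctly absorbs the pairs where the shift-by-$\tau$ trick fails, and that the monotonicity of $F_{n,u}$ is used to combine the two one-sided bounds into a single symmetric term.
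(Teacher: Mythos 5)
Your proof is correct and takes essentially the same approach as the paper: both separate indices with $|u_i-v_i|>\tau$ from the rest and bound the contribution of the remaining indices by a $2\tau$-window increment of $F_{n,u}$. The only cosmetic difference is that you derive two one-sided inequalities and combine them by monotonicity, whereas the paper bounds the absolute difference $|F_{n,u}(t)-F_{n,v}(t)|$ directly by summing the four indicator terms $\mathbf{1}\{t-\tau<u_i\leq t,\,v_i>t\}$, $\mathbf{1}\{v_i\leq t,\,t<u_i\leq t+\tau\}$, $\mathbf{1}\{u_i\leq t-\tau,\,v_i>t\}$, $\mathbf{1}\{v_i\leq t,\,u_i>t+\tau\}$; both land on the same bound.
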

\begin{proof}[Proof of Lemma \ref{lem-aux-1}]
Note that
\begin{align*}
 \left|F_{n,u}(t)-F_{n,v}(t)\right|=& \frac{1}{n}\left|\sum^{n}_{i=1}\left(\mathbf{1}\{u_i\leq t\}-\mathbf{1}\{v_i\leq t\}\right)\right|
 \\ \leq & \frac{1}{n}\sum^{n}_{i=1}\left(\mathbf{1}\{u_i\leq t, v_i>t\} + \mathbf{1}\{v_i\leq t, u_i>t\}\right)
 \\ \leq & \frac{1}{n}\sum^{n}_{i=1}\left(\mathbf{1}\{t-\tau < u_i\leq t, v_i>t\} + \mathbf{1}\{v_i\leq t, t<u_i\leq t+\tau\}\right)
 \\ & + \frac{1}{n}\sum^{n}_{i=1}\left(\mathbf{1}\{u_i\leq t-\tau, v_i>t\} + \mathbf{1}\{v_i\leq t, u_i>t+\tau\}\right)
\\ \leq & \frac{1}{n}\sum^{n}_{i=1}\mathbf{1}\{t-\tau<u_i\leq t+\tau\}+\frac{1}{n}\sum^{n}_{i=1}\mathbf{1}\{|u_i-v_i|>\tau\}.
\end{align*}
The conclusion follows by taking the supremum over $t$.
\end{proof}

\begin{proof}[Proof of Theorem \ref{thm_consist}]
Throughout the arguments below, we let $C$ be a generic constant that does not depend on $n$ and is allowed to vary from one instance to another. Define $M(Q) \equiv -\mE \left[ \cL \left\{Q(U),Y\right\} \right]$ and $M_n(Q) \equiv -n^{-1} \sum_{i=1}^{n} \cL \left\{Q(U_i),Y_i \right\}$. It follows that $\widetilde{Q} = \argmax_{Q \in \cQ} M(Q)$. By the separable condition in Assumption \ref{assum:loss}, for any $\varepsilon>0$, we have 
	\begin{align*}
		\mP(d(\widehat Q, \widetilde{Q})> \varepsilon) \le& \mP(M(\widetilde{Q}) - M(\widehat Q) > \delta)\\
		\le& \mP(M(\widetilde{Q}) - M_n(\widetilde{Q}) > \delta/3) + \mP(M_n(\widetilde{Q}) - M_n(\widehat Q) > \delta/3)\\
		&+ \mP(M_n(\widehat Q) - M(\widehat Q) > \delta/3)
		\\ :=& I_1 + I_2 + I_3
	\end{align*}
with $\delta=\lambda\epsilon^2.$ For $Q_1,Q_2 \in \cQ$, Assumption \ref{assum:loss} (ii) entails that 
$$\left\vert \cL(Q_1(x),y) - \cL(Q_2(x),y)) \right\vert \le \| Q_1 - Q_2 \|_\infty | L(y)|.$$
By Theorem 2.4 of \cite{geer2000empirical}, we have
\begin{align*}
N\left(\varepsilon,\cQ,\|\cdot\|_\infty\right)\leq A \varepsilon^{-1/k}
\end{align*}
for all $\varepsilon>0$ and some $A>0.$ Let $Q_1,\dots,Q_{N_\varepsilon}$ be an $\varepsilon$-cover of $\cQ$ under the metric $\|\cdot\|_\infty$, i.e., for any $Q\in \cQ$, there exists a $Q_i$ such that $\|Q-Q_i\|_\infty \leq \varepsilon$. Then the brackets $[\cL(Q_i(\cdot),\cdot)-\varepsilon L, \cL(Q_i(\cdot),\cdot)+\varepsilon L]$ for $1\leq i\leq N_\varepsilon$ form a cover of $\mathcal M = \{ \cL(Q(\cdot),\cdot), Q \in \cQ \}$ under the metric $\|\cdot\|_{\mP_{UY}}$ because
\begin{align*}
\vert \cL(Q(\cdot),\cdot)-\cL(Q_i(\cdot),\cdot) \vert  &\le L(y) \|Q-Q_i\|_\infty\\   
&\le \varepsilon L(y), 
\end{align*}
where the first inequality comes from Assumption \ref{assum:loss} (ii).

As a result, we have
\begin{align*}
N_{[\,]} \left(2\varepsilon \|L\|_{\mP_Y},\mathcal M, \|\cdot\|_{\mP_{UY}}\right) \le N\left(\varepsilon, \cQ, \Vert \cdot \Vert_{\infty}\right) \le A \varepsilon^{-1/k},
\end{align*}
which implies that $\mathcal M$ is a Glivenko-Cantelli class, i.e., $\sup_{Q \in \cQ} \left\vert M_n(Q) - M(Q) \right\vert =o_p(1)$. Thus, both $I_1$ and $I_3$ converge to zero as $n\rightarrow \infty$. 

Below, we show that $I_2\rightarrow 0$ as $n\rightarrow \infty$. 
Using Assumptions \ref{assum: Q}-\ref{assum:loss}, we get
	\begin{align*}	\sup_{Q\in \cQ}\left|\sum_{i=1}^n\mathcal{L}(Q(\widehat{U}_i),Y_i)-\sum_{i=1}^n\mathcal{L}(Q(U_i),Y_i)\right|    
		\leq & \sum_{i=1}^n L(Y_i)\sup_{Q\in \cQ}|Q(\widehat{U}_i)-Q(U_i)|
		\\ \le& C \sum_{i=1}^{n} L(Y_i)\left\vert \widehat U_i - U_i \right\vert
		\\ = & C \sum_{i:S_i=0} L(Y_i)|\widehat{F}_{0}\circ \widehat{f}(X_i,0)-F_{0}^{\ast}\circ f^*(X_i,0)|
		\\&+C \sum_{i:S_i=1} L(Y_i)|\widehat{F}_{1}\circ \widehat{f}(X_i,1)-F_{1}^{\ast}\circ f^*(X_i,1)|,
	\end{align*}
where we have used the mean-value theorem and the fact that $\sup_{Q\in \cQ}\|Q^{(1)}\|_\infty\leq R_2$ to get $$\sup_{Q\in \cQ}|Q(x)-Q(x')|\leq \sup_{Q\in \cQ} \|Q^{(1)}\|_\infty |x-x'| \leq  R_2 |x-x'|.$$ 
Below, we deal with the term $\sum_{i:S_i=0} L(Y_i)|\widehat{F}_{0}\circ \widehat{f}(X_i,0)-F_{0}^{\ast}\circ f^*(X_i,0)|$ and remark that the other term can be handled in the same way. Note that
	\begin{align*}
		& \sum_{i:S_i=0} L(Y_i)|\widehat{F}_{0}\circ \widehat{f}(X_i,0)-F_{0}^{\ast}\circ f^*(X_i,0)|
		\\ \leq &  \sum_{i:S_i=0} L(Y_i)|\widehat{F}_{0}\circ \widehat{f}(X_i,0)-F_{0}^{\ast}\circ \widehat{f}(X_i,0)| +  \sum_{i:S_i=0} L(Y_i)| F_{0}^{\ast}\circ \widehat{f}(X_i,0)-F_{0}^{\ast}\circ f^*(X_i,0)|
		\\ :=&J_1 + J_2.
	\end{align*}
By the Cauchy–Schwarz inequality and Assumption \ref{assum: F}, we have 
\begin{align*}
		J_2\leq& C\sum_{i:S_i=0} L(Y_i)| \widehat{f}(X_i,0)-f^*(X_i,0)|
		\\ \leq& C\left(\sum_{i:S_i=0} L(Y_i)^2\right)^{1/2}\left(\sum_{i:S_i=0}| \widehat{f}(X_i,0)-f^*(X_i,0)|^2\right)^{1/2}.
\end{align*}
The first term on the last inequality is of the order $O_p(n^{1/2})$. For the second term, we have 
	\begin{align*}
		&\mathbb{E}\left[\left(\sum_{i:S_i=0}| \widehat{f}(X_i,0)-f^*(X_i,0)|^2\right)^{1/2}\right]
		\\ \leq & \left(\sum_{i=1}^n \mathbb{E}[| \widehat{f}(X_i,0)-f^*(X_i,0)|^2|S_i=0]P(S_i=0)\right)^{1/2}\leq Cn^{1/2}r_{n,d}. 
	\end{align*}
Therefore, $J_2=O_p(nr_{n,d})$. On the other hand, let $\widetilde{F}_s$ be the empirical distribution function of $\mathcal{D}_s^*=\{f^*(X_i,S_i): S_i=s,1\leq i\leq n\}$. We have 
	\begin{align*}
		J_1 \leq \left(\sup_v|\widehat{F}_{0}(v)-\widetilde{F}_{0}(v)|+ \sup_v |\widetilde{F}_{0}(v)-F_{0}^*(v)|\right)\left(\sum_{i:S_i=0}L(Y_i)\right),    
	\end{align*}
	where $\sum_{i:S_i=0}L(Y_i)=O_p(n).$ Lemma \ref{lem-aux-1} yields that
	\begin{align*}
		\sup_v|\widehat{F}_{0}(v)-\widetilde{F}_{0}(v)|\leq  & \sup_v|
        \widetilde{F}_{0}(v+\tau)-\widetilde{F}_{0}(v-\tau)| \\
		 & + \frac{1}{n_0\tau}\sum_{i:S_i=0}\left|\widehat{f}(X_i,0)-f^*(X_i,0)\right|
         \\ \leq  & \sup_v|
        F^\ast_0(v+\tau)-F_{0}^\ast(v-\tau)| \\
		 & + \frac{1}{n_0\tau}\sum_{i:S_i=0}\left|\widehat{f}(X_i,0)-f^*(X_i,0)\right|
         \\ & + 2\sup_v|\widetilde{F}_{0}(v)-F_{0}^*(v)|
	\end{align*}
for any $\tau>0$. By Assumption \ref{assum: f} and the fact that $n_0/n\rightarrow^p r_0$,
\begin{align*}
\frac{1}{n_0\tau}\sum_{i:S_i=0}\left|\widehat{f}(X_i,0)-f^*(X_i,0)\right|=O_p(\tau^{-1}r_{n,d}).    
\end{align*}
On the other hand, by Assumption \ref{assum: F}, $\sup_v|
        F^\ast_0(v+\tau)-F_{0}^\ast(v-\tau)|=O(\tau).$
Using the fact that $\sup_v |\widetilde{F}_{0}(v)-F_{0}^*(v)|=O_p(n^{-1/2})$ and choosing $\tau=r_{n,d}^{1/2}$, we have $$\sup_v|\widehat{F}_{0}(v)-\widetilde{F}_{0}(v)|=O_p(r_{n,d}^{1/2}+n^{-1/2}).$$ 
Therefore, 
	\begin{align}\label{eq-uhat}
		\sup_{Q\in\cQ}\left|\sum_{i=1}^n\mathcal{L}(Q(\widehat{U}_i),Y_i)-\sum_{i=1}^n\mathcal{L}(Q(U_i),Y_i)\right|=O_p(nr_{n,d}^{1/2} + n^{1/2}).      
	\end{align}
	Next, by the definition of $\widehat{Q}$, we get
	\begin{align*}
		\sum_{i=1}^n\mathcal{L}(\widehat{Q}(\widehat{U}_i),Y_i) \leq \sum_{i=1}^n\mathcal{L}(\widetilde{Q}(\widehat{U}_i),Y_i).    
	\end{align*}
	Together with (\ref{eq-uhat}), we obtain
	\begin{align*}
		\sum_{i=1}^n\mathcal{L}(\widehat{Q}(U_i),Y_i) \leq \sum_{i=1}^n\mathcal{L}(\widetilde{Q}(U_i),Y_i) + O_p(nr_{n,d}^{1/2} + n^{1/2}),   
	\end{align*}
	i.e., $M_n(\widehat Q)\geq M_n(\widetilde{Q}) + O_p(r_{n,d}^{1/2} + n^{-1/2})$. Therefore, $I_2\rightarrow 0$, and the conclusion follows.
\end{proof}

\begin{proof}[Proof of Proposition \ref{prop-sep-con}]
Given $U$, we define
$h_U(\cdot)=h(\cdot;Q_0^*(U),Q_1^*(U))$. By the definition of $\widetilde{Q}$, we must have $h^{(1)}_U(\widetilde{Q}(U))=0$. As $h_U$ is strongly convex and $h^{(2)}_U(x)\geq 2\lambda$ for any $x$, we obtain 
\begin{align*}
h_U(Q(U)) - h_U(\widetilde{Q}(U)) \geq & h^{(1)}_{U}(\widetilde{Q}(U)) (Q(U)-\widetilde{Q}(U)) + \lambda (Q(U)-\widetilde{Q}(U))^2 
\\ =& \lambda (Q(U)-\widetilde{Q}(U))^2.    
\end{align*}
Taking expectations on both sides leads to the conclusion. It remains to verify $h^{(2)}_U(x)\geq 2\lambda$ when $\mathcal{L}$ is the squared or the cross-entropy loss. This is clearly true for the squared loss as $h^{(2)}_U(x)=2$ for all $x$. For the cross-entropy loss, note that
\begin{align*}
h^{(2)}_U(x)=\sum^{1}_{s=0}r_s\left\{\frac{Q^*_s(U)}{x^2}+\frac{1-Q_s^*(U)}{(1-x)^2}\right\}\geq \sum^{1}_{s=0}r_s\left\{Q_s^*(U) + 1 - Q_s^*(U)\right\}=1      
\end{align*}
for all $x\in (0,1)$.
\end{proof}

\begin{lemma}\label{lem-uhat}
Under Assumptions \ref{assum: f}-\ref{assum: F}, we have
\begin{align*}
\mE\left[|U-\widehat{U}|^2\big|S=s\right]=O(\{r_{n,d}\sqrt{\log(n)}\}^2+n^{-1}),\quad s=0,1.    
\end{align*}
Moreover, under the assumption in Remark \ref{rm1} and Assumption \ref{assum: F},
\begin{align*}
\mE\left[|U-\widehat{U}|^2\big|S=s\right]=O(r_{n,d}^{2m/(m+1)}+n^{-1}),\quad s=0,1.
\end{align*}
\end{lemma}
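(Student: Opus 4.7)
\textbf{Proof proposal for Lemma \ref{lem-uhat}.}

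The plan is to decompose
\[
U - \widehat{U} \;=\; \bigl[F_s^*(f^*(X,S)) - F_s^*(\widehat{f}(X,S))\bigr] \;+\; \bigl[F_s^*(\widehat{f}(X,S)) - \widehat{F}_s(\widehat{f}(X,S))\bigr] \;=:\; E_1 + E_2,
\]
and control $\mE[E_1^2\mid S=s]$ and $\mE[E_2^2\mid S=s]$ separately. For $E_1$, the bounded density in Assumption \ref{assum: F} makes $F_s^*$ Lipschitz, so $|E_1|\leq L\,|\widehat{f}(X,S) - f^*(X,S)|$. In the sub-Gaussian regime, integrating the tail in Assumption \ref{assum: f} gives $\mE[|\widehat{f}-f^*|^2\mid S=s] = O(r_{n,d}^2)$, while in the moment regime of Remark \ref{rm1}, Jensen applied to the $2m$-th moment yields the same $O(r_{n,d}^2)$. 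Hence $\mE[E_1^2\mid S=s] = O(r_{n,d}^2)$ in both regimes, which will be absorbed into the final rates.

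The main work is on $E_2$. I would upper bound $|E_2|$ by $\sup_v|F_s^*(v) - \widehat{F}_s(v)|$ and split this uniform error via the oracle empirical CDF $\widetilde{F}_s$ of the unobserved sample $\{f^*(X_i,S_i):S_i=s\}$, so that $\sup_v|F_s^* - \widehat{F}_s| \le \sup_v|F_s^* - \widetilde{F}_s| + \sup_v|\widetilde{F}_s - \widehat{F}_s|$. The oracle piece contributes $\mE[\sup_v|F_s^* - \widetilde{F}_s|^2] = O(n^{-1})$ by the DKW inequality together with $n_s/n \to r_s > 0$ from Assumption \ref{assum: F}. For the plug-in piece I would invoke Lemma \ref{lem-aux-1} with a threshold $\tau>0$, which yields $\sup_v|\widetilde{F}_s - \widehat{F}_s| \le A_\tau + B_\tau$, where $A_\tau = \frac{1}{n_s}\sum_{i:S_i=s}\mathbf{1}\{|\widehat{f}(X_i,S_i)-f^*(X_i,S_i)|>\tau\}$ and $B_\tau = \sup_v\{\widetilde{F}_s(v+\tau)-\widetilde{F}_s(v-\tau)\}$. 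Both lie in $[0,1]$, so $\mE[A_\tau^2]\le \mE[A_\tau] \le \mP(|\widehat{f}(X,S)-f^*(X,S)|>\tau\mid S=s)$, and $\mE[B_\tau^2] = O(\tau^2 + n^{-1})$ by the Lipschitzness of $F_s^*$ combined with DKW applied to $\widetilde{F}_s - F_s^*$.

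The main obstacle is choosing $\tau$ to balance the two contributions in each regime. In the sub-Gaussian regime, $\mE[A_\tau]\leq C\exp(-C\tau^2/r_{n,d}^2)$; picking $\tau = c\,r_{n,d}\sqrt{\log n}$ with $c$ sufficiently large forces this to $O(n^{-1})$ and delivers $\mE[E_2^2\mid S=s] = O(r_{n,d}^2\log n + n^{-1})$, which already subsumes the $E_1$ contribution. In the moment regime, Markov on $|\widehat{f}-f^*|^{2m}$ gives $\mE[A_\tau]\leq C r_{n,d}^{2m}/\tau^{2m}$; equating this with $\tau^2$ dictates the optimal choice $\tau \asymp r_{n,d}^{m/(m+1)}$, yielding $\mE[E_2^2\mid S=s] = O(r_{n,d}^{2m/(m+1)} + n^{-1})$, which again subsumes $E_1$ since $r_{n,d}^{2m/(m+1)}\geq r_{n,d}^2$ for $m\ge 1$ once $r_{n,d}\le 1$. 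Combining the $E_1$ and $E_2$ bounds produces the two stated rates.
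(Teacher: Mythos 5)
Your proposal is correct and follows essentially the same route as the paper: peel off the Lipschitz term $F_s^*\!\circ\widehat f - F_s^*\!\circ f^*$, bound the remainder via $\sup_v|\widehat F_s - \widetilde F_s| + \sup_v|\widetilde F_s - F_s^*|$ with the oracle empirical CDF $\widetilde F_s$, control the first piece with Lemma \ref{lem-aux-1} and the second with DKW, then balance the threshold $\tau$ differently in the two regimes. The only cosmetic differences are that you group the three terms as $E_1 + E_2$ before splitting $E_2$, and you bound $\mE[A_\tau^2]\le\mE[A_\tau]$ and use Markov on the indicator in the moment regime where the paper uses Cauchy--Schwarz and the pointwise bound $\mathbf{1}\{|z|>\tau\}\le |z|^m/\tau^m$; these yield the same orders.
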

\begin{proof}[Proof of Lemma \ref{lem-uhat}]
Conditional on $S=0$ and using Assumption \ref{assum: F}, we have
\begin{align*}
|U-\widehat{U}|^2=& |\widehat{F}_{0}\circ \widehat{f}(X,0)-F_{0}^{\ast}\circ f^*(X,0)|^2 
\\ \leq & 3|\widehat{F}_{0}\circ \widehat{f}(X,0)-\widetilde{F}_{0}\circ \widehat{f}(X,0)|^2 + 3|\widetilde{F}_{0}\circ \widehat{f}(X,0)-F_{0}^{\ast}\circ \widehat{f}(X,0)|^2 
\\ & + 3| F_{0}^{\ast}\circ \widehat{f}(X,0)-F_{0}^{\ast}\circ f^*(X,0)|^2
\\ \leq & 3\sup_v|\widehat{F}_{0}(v)-\widetilde{F}_{0}(v)|^2 +3 \sup_v |\widetilde{F}_0(v)-F_0^\ast(v)|^2 + C |\widehat{f}(X,0)-f^\ast(X,0)|^2.
\end{align*}
We first derive the result under Assumption \ref{assum: f}. By Lemma \ref{lem-aux-1},
\begin{align*}
\sup_v|\widehat{F}_{0}(v)-\widetilde{F}_{0}(v)|^2\leq & C^2\tau^2
		 + \frac{C}{n_0^2}\left(\sum_{i:S_i=0}\mathbf{1}\left\{\left|\widehat{f}(X_i,0)-f^*(X_i,0)\right|>\tau\right\}\right)^2
        \\& + C\sup_v|\widetilde{F}_{0}(v)-F_{0}^*(v)|^2.
\end{align*}
Using the Dvoretzky–Kiefer–Wolfowitz inequality, it can be shown that 
$$\mE[\sup_v|\widetilde{F}_{0}(v)-F_{0}^*(v)|^2]=\int
\mP\left(\sup_t|\widetilde{F}_{0}(v)-F_{0}^*(v)|>\sqrt{u}\right)du=O(n^{-1}).$$ 
By Assumption \ref{assum: f}, 
\begin{align*}
\mE\left[\left(\sum_{i:S_i=0}\mathbf{1}\left\{\left|\widehat{f}(X_i,0)-f^*(X_i,0)\right|>\tau\right\}\right)^2\right]
\leq& n^2r_0 \mP\left(\left|\widehat{f}(X,0)-f^*(X,0)\right|>\tau|S=0\right)
\\ \leq& Cn^2r_0 \exp(-C\tau^2/r_{n,d}^2). 
\end{align*}
Setting $\tau=r_{n,d}\sqrt{\log(n)}$ leads to 
\begin{align*}
\mE\left[|U-\widehat{U}|^2\big|S=0\right]=O(\{r_{n,d}\sqrt{\log(n)}\}^2+n^{-1}). 
\end{align*}

Next, we derive the result under (\ref{as-moment}). Again by Lemma \ref{lem-aux-1},
\begin{align*}
\sup_v|\widehat{F}_{0}(v)-\widetilde{F}_{0}(v)|^2\leq & C^2\tau^2
		 + \frac{C}{n_0^2\tau^{2m}}\left(\sum_{i:S_i=0}\left|\widehat{f}(X_i,0)-f^*(X_i,0)\right|^m\right)^2
        \\& + C\sup_v|\widetilde{F}_{0}(v)-F_{0}^*(v)|^2.
\end{align*}
As $\mE[|\widehat{f}(X,0)-f^\ast(X,0)|^{2m}|S=0]=O(r_{n,d}^{2m})$, setting $\tau=r_{n,d}^{m/(m+1)}$ leads to 
\begin{align*}
\mE\left[|U-\widehat{U}|^2\big|S=0\right]=O(r_{n,d}^{2m/(m+1)}+n^{-1}).    
\end{align*}
\end{proof}

\begin{proof}[Proof of Theorem \ref{thm_rate}]
Write $\delta_n=\{r_{n,d}\sqrt{\log(n)}\}^{\frac{2k+1}{2k+3}}\vee n^{-\frac{2k+1}{2(2k+3)}}.$ We will prove the convergence order by showing that for any $\varepsilon>0$, there exists an $N>0$ such that
	\begin{align*}
	\mP(d(\widehat Q, \widetilde{Q}) > 2^{N} \delta_n) < \varepsilon.
	\end{align*}
	Define $\widehat M_n(Q) = -n^{-1}\sum_{i=1}^{n} \cL(Q(\widehat U_i), Y_i)$ and let
	$$S_{n,j} = \left\{Q \in \cQ: 2^j \delta_n< d(Q,\widetilde{Q}) \le 2^{j+1} \delta_n\right\}.$$
For an arbitrary constant $N>0$ and $\gamma>2^N\delta_n$,
	\begin{align*}
		\mP(d(\widehat Q, \widetilde{Q}) > 2^{N} \delta_n) \le& \mP(d(\widehat Q, \widetilde{Q}) > 2^{N} \delta_n, d(\widehat Q,\widetilde{Q}) \le \gamma) + \mP(d(\widehat Q,\widetilde{Q}) > \gamma)\\
		\le&  \sum_{j \ge N, 2^{j} \delta_n < \gamma} \mP(2^j \delta_n< d(\widehat Q, \widetilde{Q}) \le 2^{j+1} \delta_n) + o(1)\\
		=& \sum_{j \ge N, 2^{j} \delta_n < \gamma} \mP(\widehat Q \in S_{n,j}) + o(1).
	\end{align*}
	For any $Q \in S_{n,j}$, since $d(Q,\widetilde{Q}) > 2^j \delta_n$, Assumption \ref{assum:loss} implies that
	\begin{align*}
		M(\widetilde{Q}) - M(Q) \ge \lambda d^2 (Q,\widetilde{Q}) > \lambda 2^{2j} \delta_n^2.
	\end{align*}
	Since $\widehat Q$ maximizes $\widehat M_n(Q)$ over $\cQ$ , $\widehat Q \in S_{n,j}$ entails that
	\begin{align*}
		\sup_{Q \in S_{n,j}}\widehat M_n(Q) - \widehat M_n(\widetilde{Q}) \ge 0.
	\end{align*}
	Combining the above results, we obtain
	\begin{align}
		\mP(\widehat Q \in S_{n,j})
		\le& \mP\left(\sup_{Q \in S_{n,j}}\left\{ M(\widetilde{Q}) - M(Q) + \widehat M_n(Q)-\widehat M_n(\widetilde{Q}) \right\} > \lambda 2^{2j} \delta_n^2\right) \notag\\ 
		\le& \mP\left(\sup_{\left\{Q \in \cQ: d(Q,\widetilde{Q}) \le 2 ^{j+1} \delta_n \right\}} \{G_n(Q) + \Delta_n(Q)\} > \lambda 2^{2j} \delta_n^2\right)  \notag\\
		\le & \mP\left(\sup_{\left\{Q \in \cQ: d(Q,\widetilde{Q}) \le  2^{j+1} \delta_n\right\}} \Delta_n(Q) > \lambda 2^{2j-1} \delta_n^2\right) \label{peel: Delta_n}\\ 
		&+ \mP\left(\sup_{\left\{Q \in \cQ: d(Q,\widetilde{Q}) \le 2^{j+1} \delta_n\right\}} G_n(Q) > \lambda 2^{2j-1} \delta_n^2\right), \label{peel: G_n}
	\end{align}
where we have defined 
\begin{align*}
& \Delta_n(Q):=\widehat M_n(Q)- M_n(Q)+ M_n(\widetilde{Q}) - \widehat M_n(\widetilde{Q}),\\
& G_n(Q) := M_n(Q) - M\left( Q\right) - M_n(\widetilde{Q}) + M( \widetilde{Q}).
\end{align*}

We first control the term in (\ref{peel: G_n}). By Markov's inequality, 
\begin{align*}
\mP\left(\sup_{\left\{Q \in \cQ: d(Q,\widetilde{Q}) \le 2^{j+1} \delta_n\right\}} G_n(Q) > \lambda 2^{2j-1} \delta_n^2\right) 
\le \frac{1}{\lambda 2^{2j-1} \delta_n^2}\mE \left[\sup_{\left\{Q \in \cQ: d(Q,\widetilde{Q}) \le 2^{j+1} \delta_n\right\}}  \vert G_n(Q)\vert \right].
\end{align*}
Define the class of functions 
$$\mathcal{G} = \left\{ \cL \left\{Q(\cdot),\cdot\right\} - \cL \{\widetilde{Q}(\cdot),\cdot\}: Q \in \cQ, d(Q,\widetilde{Q}) \le 2^{j+1} \delta_n\right\}.$$ 
For $Q_1,Q_2\in \{Q \in \cQ: d(Q,\widetilde{Q}) \le 2^{j+1} \delta_n\}$, Assumption \ref{assum:loss} and the condition that $\sup_y|L(y)|\leq B$ imply that 
\begin{align*}
\left\| \cL(Q_1(\cdot),\cdot) - \cL(Q_2(\cdot),\cdot) \right\|_{\mP_{UY}}^2= & \mE\left[\{\cL(Q_1(U),Y) - \cL(Q_2(U),Y)\}^2\right]
\\ \leq & \mE\left[L(Y)^2\{Q_1(U)-Q_2(U)\}^2\right]
\\ \leq & B^2 d^2(Q_1,Q_2)\leq B^2 2^{2(j+1)}\delta_n^2.
\end{align*}
Following the argument in the proof of Theorem \ref{thm_consist}, 
let $Q_1,\dots,Q_{N_\varepsilon}$ be an $\varepsilon$-cover of $\{Q \in \cQ: d(Q,\widetilde{Q}) \le 2^{j+1} \delta_n\}$
under the metric $\|\cdot\|_2$, i.e., for any $Q\in \widetilde{\cQ}$ with $d(Q,\widetilde{Q}) \le 2^{j+1} \delta_n$, there exists a $Q_i$ such that $\|Q-Q_i\|_2 \leq \varepsilon$. Then the brackets $[\cL(Q_i(\cdot),\cdot)-B\varepsilon, \cL(Q_i(\cdot),\cdot)+B\varepsilon]$ for $1\leq i\leq N_\varepsilon$ form an $\varepsilon$-cover of $\mathcal{G}$ under the metric $\|\cdot\|_{\mP_{UY}}$. Thus the bracketing number of $\mathcal{G}$ satisfies that
\begin{align*}
N_{[\,]} \left(2B\varepsilon,\mathcal G, \|\cdot\|_{\mP_{UY}}\right) \le  N(\varepsilon,\{Q \in \cQ: d(Q,\widetilde{Q}) \le 2^{j+1} \delta_n\},\|\cdot\|_2)
\le A \left(\frac{2^{j+1} \delta_n}{\varepsilon}\right)^{1/k},
\end{align*}
where we used the fact that 
$$N(\varepsilon,\{Q \in \cQ: d(Q,\widetilde{Q}) \le \tau\},\|\cdot\|_2)\leq N(\varepsilon,\{Q \in \cQ: d(Q,\widetilde{Q}) \le \tau\},\|\cdot\|_\infty)
\le A \left(\frac{\tau}{\varepsilon}\right)^{1/k}.$$
As $\sup_y|L(y)|\leq B$, $\|\cL \left\{Q(\cdot),\cdot\right\} - \cL \{\widetilde{Q}(\cdot),\cdot\}\|_\infty\leq C$ for some constant $C>0.$ By Lemma 3.4.2 of \cite{van1996weak}, we have
\begin{align*}
\mE \left[\sup_{Q \in \cQ: d\left(Q,\widetilde{Q}\right) \le 2^{j+1}\delta_n} n^{1/2}\vert G_n(Q) \vert \right] \leq J_{[\,]}(B2^{j+1}\delta_n,\mathcal{G},\|\cdot\|_{\mP_{UY}})\left(1+\frac{J_{[\,]}(B2^{j+1}\delta_n,\mathcal{G},\|\cdot\|_{\mP_{UY}})}{B^22^{2(j+1)}\delta_n^2\sqrt{n}}C\right),
\end{align*}
where
\begin{align*}
J_{[\,]}(\xi,\mathcal{G},\|\cdot\|_{\mP_{UY}})=\int_{0}^{\xi}  \sqrt{1+\log N_{[\,]} \left(\varepsilon,\mathcal G, \|\cdot\|_{\mP_{UY}}\right)} d\varepsilon.    
\end{align*}
It implies that
\begin{align*}
\mE \left[\sup_{Q \in \cQ: d\left(Q,\widetilde{Q}\right) \le 2^{j+1}\delta_n} n^{1/2}\vert G_n(Q) \vert \right]\leq B2^{j+1}\delta_n\left(1+\frac{C}{B2^{j+1}\delta_n \sqrt{n}}\right).
\end{align*}
Therefore,
\begin{align*}
\mP\left(\sup_{\left\{Q \in \cQ: d(Q,\widetilde{Q}) \le 2^{j+1} \delta_n\right\}} G_n(Q) > \lambda 2^{2j-1} \delta_n^{2}\right) 
\le& \frac{1}{\lambda 2^{2j-1} \delta_n^2}  \mE \left[\sup_{Q \in \cQ: d\left(Q,\widetilde{Q}\right) 
\le 2^{j+1}\delta_n} \vert G_n(Q) \vert \right]
\\ \leq & \frac{C 2^{j+1}\delta_n}{n^{1/2}\lambda 2^{2j-1} \delta_n^2} 
\\ \leq & \frac{C}{ n^{1/2} 2^{j}\delta_n}.
\end{align*}

Next, we control the term in (\ref{peel: Delta_n}). Note that
\begin{align*}
\Delta_n(Q)
=& \frac{1}{n}\sum_{i=1}^{n}\left[ \left\{\cL(Q(U_i),Y_i) - \cL(Q(\widehat U_i),Y_i)\right\} -  \left\{\cL(\widetilde{Q}(U_i),Y_i) - \cL(\widetilde{Q}(\widehat U_i),Y_i)\right\}\right].
\end{align*}
By the smoothness property of $\cL(\cdot,\cdot)$ and for $Q\in \cQ$ with $d(Q,\widetilde{Q}) \le 2^{j+1} \delta_n$, the mean-value theorem implies that
\begin{align*}
&\left\{\cL(Q(U_i),Y_i) - \cL(Q(\widehat U_i),Y_i)\right\} -  \left\{\cL(\widetilde{Q}(U_i),Y_i) - \cL(\widetilde{Q}(\widehat U_i),Y_i)\right\}\\
=& \cL^{\prime} (\widetilde Q_{1,i}, Y_i) \left\{ Q(U_i)- Q(\widehat U_i) \right\} - \cL^{\prime} (\widetilde Q_{2,i}, Y_i) \left\{ \widetilde{Q}(U_i)-\widetilde{Q}(\widehat U_i) \right\} \\
= & \cL^{\prime} (\widetilde Q_{1,i}, Y_i) \left\{ Q(U_i)- Q(\widehat U_i) - \widetilde{Q}(U_i) + \widetilde{Q}(\widehat U_i) \right\} 
\\ & +\left\{\cL^{\prime} (\widetilde Q_{1,i}, Y_i) - \cL^{\prime} (\widetilde Q_{2,i}, Y_i)\right\}\left\{\widetilde{Q}(U_i) - \widetilde{Q}(\widehat U_i) \right\} 
\\ :=& I_{1,i}(Q) + I_{2,i}(Q)
\end{align*}
where $\widetilde Q_{1,i}$ lies between $Q(U_i)$ and $Q(\widehat U_i)$, $\widetilde Q_{2,i}$ lies between $\widetilde{Q}(U_i)$ and $\widetilde{Q}(\widehat U_i)$ and 
\begin{align*}
I_{1,i}(Q):=\cL^{\prime} (\widetilde Q_{1,i}, Y_i) \left\{ Q(U_i)- Q(\widehat U_i) - \widetilde{Q}(U_i) + \widetilde{Q}(\widehat U_i) \right\},\\
I_{2,i}(Q):=\left\{\cL^{\prime} (\widetilde Q_{1,i}, Y_i) - \cL^{\prime} (\widetilde Q_{2,i}, Y_i)\right\}\left\{\widetilde{Q}(U_i) - \widetilde{Q}(\widehat U_i) \right\}.
\end{align*}
By the mean-value theorem again, we get
\begin{align*}
|I_{1,i}(Q)|\leq \left|\cL^{\prime} (\widetilde Q_{1,i}, Y_i)\right| \left|Q^{(1)}(\breve{U}_i)-\widetilde{Q}^{(1)}(\breve{U}_i)\right| \left|U_i-\widehat{U}_i\right|,  
\end{align*}
where $\breve{U}_i$ is between $U_i$ and $\widehat{U}_i$. It thus implies that
\begin{align*}
\frac{1}{n}\sum^{n}_{i=1}|I_{1,i}(Q)| \leq & \frac{C}{n}\sum^{n}_{i=1} \left|Q^{(1)}(\breve{U}_i)-\widetilde{Q}^{(1)}(\breve{U}_i)\right| \left|U_i-\widehat{U}_i\right| \\
\leq & \frac{C}{n}\left(\sum^{n}_{i=1} \left|Q^{(1)}(\breve{U}_i)-\widetilde{Q}^{(1)}(\breve{U}_i)\right|^2\right)^{1/2} \left(\sum^{n}_{i=1}\left|U_i-\widehat{U}_i\right|^2\right)^{1/2}.
\end{align*}

To deal with the term $\sum^{n}_{i=1} \left|Q^{(1)}(\breve{U}_i)-\widetilde{Q}^{(1)}(\breve{U}_i)\right|^2$, we note that
\begin{align*}
\left|Q^{(1)}(\breve{U}_i)-\widetilde{Q}^{(1)}(\breve{U}_i)\right|^2\leq & 3|Q^{(1)}(\breve{U}_i)-Q^{(1)}(U_i)|^2+ 3|\widetilde{Q}^{(1)}(\breve{U}_i)-\widetilde{Q}^{(1)}(U_i)|^2 
\\&+ 3 \left|Q^{(1)}(U_i)-\widetilde{Q}^{(1)}(U_i)\right|^2
\\ \leq & C|\widehat{U}_i-U_i|^2 + 3|Q^{(1)}(U_i)-\widetilde{Q}^{(1)}(U_i)|^2,
\end{align*}
where we use the assumption that $\sup_{Q\in \widetilde{\cQ}}\|Q^{(2)}\|_\infty\leq C$. Therefore, 
\begin{align*}
\frac{1}{n}\sum^{n}_{i=1}|I_{1,i}(Q)| \leq \frac{C}{n}\sum^{n}_{i=1}\left|U_i-\widehat{U}_i\right|^2
+ \frac{C}{n}\left(\sum^{n}_{i=1} \left|Q^{(1)}(U_i)-\widetilde{Q}^{(1)}(U_i)\right|^2\right)^{1/2} \left(\sum^{n}_{i=1}\left|U_i-\widehat{U}_i\right|^2\right)^{1/2}.
\end{align*}
By Lemma \ref{lem-uhat}, we have
\begin{align*}
\sup_i\sup_{s=0,1}\mE\left[\left|U_i-\widehat{U}_i\right|^2\Big|S_i=s\right]=O(\{r_{n,d}\sqrt{\log(n)}\}^2 + n^{-1}),  
\end{align*}
which implies that
\begin{align*}
&\mP\left(\sup_{\left\{Q \in \cQ: d(Q,\widetilde{Q}) \le 2^{j+1} \delta_n\right\}}\frac{1}{n}\sum^{n}_{i=1}|I_{1,i}(Q)|>\lambda 2^{2j-2}\delta_n^2\right) 
 \\ \leq & \frac{\mE[\sup_{\left\{Q \in \cQ: d(Q,\widetilde{Q}) \le 2^{j+1} \delta_n\right\}}n^{-1}\sum^{n}_{i=1}|I_{1,i}(Q)|]}{\lambda 2^{2j-2}\delta_n^2}
\\ \leq& C\frac{\{r_{n,d}\sqrt{\log(n)}\}^2+n^{-1} + D_{1,j}^{1/2}\sqrt{\{r_{n,d}\sqrt{\log(n)}\}^2+n^{-1}}}{\lambda 2^{2j-2}\delta_n^2}.
\end{align*}
Here, we define  
\begin{align*}
D_{1,j} = \mE\left[\sup_{\left\{Q \in \cQ: d(Q,\widetilde{Q}) \le 2^{j+1} \delta_n\right\}}n^{-1}\sum^{n}_{i=1}|Q^{(1)}(U_i)-\widetilde{Q}^{(1)}(U_i)|^2\right].    
\end{align*}
The Gagliardo–Nirenberg interpolation inequality yields that
\begin{align*}
\mE[|Q^{(1)}(U)-\widetilde{Q}^{(1)}(U)|^2]=d^2(Q^{(1)},\widetilde{Q}^{(1)})\leq Cd^{2(2k-1)/(2k+1)}(Q,\widetilde{Q})   
\end{align*}
due to that $\sup_{Q\in\mathcal{Q}}\|Q^{(k)}\|_{\infty} <\infty$.
Consider the class of functions $\widetilde{\mathcal{G}}=\{|Q^{(1)}(\cdot)-\widetilde{Q}^{(1)}(\cdot)|^2: Q \in \cQ, d(Q^{(1)},\widetilde{Q}^{(1)}) \le 2^{(j+1)(2k-1)/(2k+1)} \delta_n^{(2k-1)/(2k+1)}\}$. For any $g\in \widetilde{\mathcal{G}}$, $\|g\|_\infty \leq C$ and 
\begin{align*}
N_{[\,]} \left(\varepsilon,\widetilde{\mathcal G}, \|\cdot\|_{2}\right)
\le C\left(\frac{2^{(j+1)(2k-1)/(2k+1)} \delta_n^{(2k-1)/(2k+1)}}{\varepsilon}\right)^{1/(k-1)}.  
\end{align*}
By Lemma 3.4.2 of \cite{van1996weak}, we get
\begin{align*}
D_{1,j} \leq & \mE\left[\sup_{\left\{Q \in \cQ: d(Q,\widetilde{Q}) \le 2^{j+1} \delta_n\right\}}n^{-1}\sum^{n}_{i=1}|Q^{(1)}(U_i)-\widetilde{Q}^{(1)}(U_i)|^2-\mE[|Q^{(1)}(U_i)-\widetilde{Q}^{(1)}(U_i)|^2]\right] 
\\& +\sup_{\left\{Q \in \cQ: d(Q,\widetilde{Q}) \le 2^{j+1} \delta_n\right\}}\mE[|Q^{(1)}(U_i)-\widetilde{Q}^{(1)}(U_i)|^2]
\\ \leq & C\frac{2^{(j+1)(2k-1)/(2k+1)}\delta_n^{(2k-1)/(2k+1)}}{\sqrt{n}}+C2^{(j+1)2(2k-1)/(2k+1)}\delta_n^{2(2k-1)/(2k+1)}
\\ \leq& C 2^{(j+1)2(2k-1)/(2k+1)}\delta_n^{2(2k-1)/(2k+1)}.
\end{align*}
Because $\sup_{Q\in\cQ}\|Q^{(1)}\|_\infty\leq C$ and $\sup_{|x|\leq R_2}|\mathcal{L}''(x,Y)|\leq C$ almost surely, we have
\begin{align*}
|I_{2,i}(Q)|\leq &  C\left|U_i-\widehat{U}_i\right|\left|\widetilde Q_{1,i}-\widetilde Q_{2,i}\right| 
\\ \leq & C\left|U_i-\widehat{U}_i\right| \left\{\left|Q(U_i)-Q(\widehat{U}_i)\right|+\left|Q(U_i)-\widetilde{Q}(U_i)\right|+\left|\widetilde{Q}(U_i)-\widetilde{Q}(\widehat{U}_i)\right|\right\}
\\ \leq & C\left\{\left|U_i-\widehat{U}_i\right|^2+\left|U_i-\widehat{U}_i\right|\left|Q(U_i)-\widetilde{Q}(U_i)\right|\right\}.
\end{align*}
Hence, 
\begin{align*}
\frac{1}{n}\sum^{n}_{i=1}|I_{2,i}(Q)| \leq & \frac{C}{n}\sum^{n}_{i=1} \left|U_i-\widehat{U}_i\right|^2 + \frac{C}{n}\left(\sum^{n}_{i=1} \left|Q(U_i)-\widetilde{Q}(U_i)\right|^2\right)^{1/2} \left(\sum^{n}_{i=1}\left|U_i-\widehat{U}_i\right|^2\right)^{1/2}. 
\end{align*}
Again, by Lemma \ref{lem-uhat}, we have
\begin{align*}
&\mP\left(\sup_{\left\{Q \in \cQ: d(Q,\widetilde{Q}) \le 2^{j+1} \delta_n\right\}}\frac{1}{n}\sum^{n}_{i=1}|I_{2,i}(Q)|>\lambda 2^{2j-2}\delta_n^2\right) 
 \\ \leq & \frac{\mE[\sup_{\left\{Q \in \cQ: d(Q,\widetilde{Q}) \le 2^{j+1} \delta_n\right\}}n^{-1}\sum^{n}_{i=1}|I_{2,i}(Q)|]}{\lambda 2^{2j-2}\delta_n^2}
\\ \leq& C\frac{\{r_{n,d}\sqrt{\log(n)}\}^2+n^{-1} + D_{2,j}^{1/2}\sqrt{\{r_{n,d}\sqrt{\log(n)}\}^2+n^{-1}}}{\lambda 2^{2j-2}\delta_n^2}.
\end{align*}
Here, we define  
\begin{align*}
D_{2,j} = \mE\left[\sup_{\left\{Q \in \cQ: d(Q,\widetilde{Q}) \le 2^{j+1} \delta_n\right\}}n^{-1}\sum^{n}_{i=1}|Q(U_i)-\widetilde{Q}(U_i)|^2\right].    
\end{align*}
Consider the class of functions $\breve{\mathcal{G}}=\{|Q(\cdot)-\widetilde{Q}(\cdot)|^2: Q \in \cQ, d(Q,\widetilde{Q}) \le 2^{j+1} \delta_n\}$. For any $g\in \breve{\mathcal{G}}$, $\|g\|_\infty \leq C$ and 
\begin{align*}
N_{[\,]} \left(\varepsilon,\breve{\mathcal G}, \|\cdot\|_{2}\right)
\le C\left(\frac{2^{j+1} \delta_n}{\varepsilon}\right)^{1/k}.  
\end{align*}
Again, by Lemma 3.4.2 of \cite{van1996weak}, we get
\begin{align*}
D_{2,j} \leq & \mE\left[\sup_{\left\{Q \in \cQ: d(Q,\widetilde{Q}) \le 2^{j+1} \delta_n\right\}}n^{-1}\sum^{n}_{i=1}|Q(U_i)-\widetilde{Q}(U_i)|^2-\mE[|Q(U_i)-\widetilde{Q}(U_i)|^2]\right] +2^{2(j+1)}\delta_n^2
\\ \leq & C\frac{2^{j+1}\delta_n}{\sqrt{n}}+2^{2(j+1)}\delta_n^2\leq C 2^{2(j+1)}\delta_n^2.
\end{align*}
Combining the above results, for any $\varepsilon>0$, by taking $N$ large enough, we have
\begin{align*}
&P(d(\widehat Q, \widetilde{Q}) > 2^N \delta_n) 
\\ \le & \sum_{j \ge N, 2^{j} \delta_n < \gamma}C\frac{\{r_{n,d}\sqrt{\log(n)}\}^2+n^{-1} + 2^{(j+1)(2k-1)/(2k+1)}\delta_n^{(2k-1)/(2k+1)}\sqrt{\{r_{n,d}\sqrt{\log(n)}\}^2+n^{-1}}}{\lambda 2^{2j-2}\delta_n^2}
\\&+\frac{C}{n^{1/2}\delta_n} \sum_{j \ge N, 2^{j} \delta_n < \gamma} 2^{-j} 
\\ \leq & \varepsilon.
\end{align*}
\end{proof}

\begin{proof}[Proof of Theorem \ref{thm_ffair}]
Recall the definition of the $L_2$ estimation error:
\begin{align*}
    d(\widehat{f}_{\text{fair}}, f^*_{\text{fair}}) &= \sqrt{\mathbb{E}_{X, S}\left[\left(\widehat{f}_{\text{fair}}(X, S) - f^*_{\text{fair}}(X , S) \right)^2\right]} \\
    &= \sqrt{\sum_{s \in \{0,1\}} r_s \mathbb{E}_{X^s}\left[\left(\widehat{f}_{\text{fair}}(X^s, s) - f^*_{\text{fair}}(X^s , s) \right)^2\right]}.
\end{align*}
Using the inequality $\sqrt{a+b} \le \sqrt{a} + \sqrt{b}$, we bound this by:
\begin{align*}
    d(\widehat{f}_{\text{fair}}, f^*_{\text{fair}}) &\leq \sum_{s \in \{0,1\}} \sqrt{r_s} \sqrt{\mathbb{E}_{X^s}\left[\left(\widehat{f}_{\text{fair}}(X^s, s) - f^*_{\text{fair}}(X^s , s) \right)^2\right]} \\
    &=: \sum_{s \in \{0,1\}} \sqrt{r_s} \|\widehat{f}_{\text{fair}} - f^*_{\text{fair}} \|_{P_{X^s}}.
\end{align*}
We decompose the error term inside the norm as follows:
\begin{align*}
    \widehat{f}_{\text{fair}}(X^s, s) - f^*_{\text{fair}}(X^s , s)
    &= \widehat{Q}\left(\widehat{F}_s(\widehat{f}(X^s, s))\right) - \widetilde{Q}\left(F_s^*(f^*(X^s, s))\right) \\
    &= \left[ \widehat{Q}\left(\widehat{F}_s(\widehat{f}(X^s, s))\right) - \widehat{Q}\left(F_s^*(f^*(X^s, s))\right) \right] \\
    &\quad + \left[ \widehat{Q}\left(F_s^*(f^*(X^s, s))\right) - \widetilde{Q}\left(F_s^*(f^*(X^s, s))\right) \right].
\end{align*}
Applying the triangle inequality to the norm $\|\cdot\|_{P_{X^s}}$:
\begin{align*}
    \|\widehat{f}_{\text{fair}} - f^*_{\text{fair}} \|_{P_{X^s}}
    &\leq \left\| \widehat{Q} \circ \widehat{F}_s \circ \widehat{f} - \widehat{Q} \circ F_s^* \circ f^* \right\|_{P_{X^s}} + \left\| \widehat{Q} \circ F_s^* \circ f^* - \widetilde{Q} \circ F_s^* \circ f^* \right\|_{P_{X^s}}.
\end{align*}
We analyze these two terms separately.
For the second term, recall that $U = F_s^*(f^*(X^s, s)) \sim \text{Unif}[0,1]$. Thus, this term corresponds exactly to the $L_2$ distance between the quantile functions:
\begin{align*}
    \left\| \widehat{Q} \circ F_s^* \circ f^* - \widetilde{Q} \circ F_s^* \circ f^* \right\|_{P_{X^s}} = \|\widehat{Q}(U) - \widetilde{Q}(U)\|_2 = d(\widehat{Q}, \widetilde{Q}).
\end{align*}
For the first term, Assumption \ref{assum: Q} implies that any function in $\mathcal{Q}$ (including the estimator $\widehat{Q}$) has a first derivative uniformly bounded by $R_1$. Consequently, $\widehat{Q}$ is Lipschitz continuous with constant $R_1$:
\begin{align*}
    \left\| \widehat{Q} \circ \widehat{F}_s \circ \widehat{f} - \widehat{Q} \circ F_s^* \circ f^* \right\|_{P_{X^s}}
    &\leq R_1 \left\| \widehat{F}_s \circ \widehat{f} - F_s^* \circ f^* \right\|_{P_{X^s}} \\
    &= R_1 \sqrt{\mathbb{E}[|\widehat{U} - U|^2 | S=s]}.
\end{align*}
Combining these bounds yields:
\begin{align*}
    d(\widehat{f}_{\text{fair}}, f^*_{\text{fair}}) \leq \sum_{s \in \{0,1\}} \sqrt{r_s} \left( R_1 \sqrt{\mathbb{E}[|\widehat{U} - U|^2 | S=s]} + d(\widehat{Q}, \widetilde{Q}) \right).
\end{align*}
From Lemma \ref{lem-uhat}, the first term is of order $O_p(r_{n,d}\sqrt{\log n} + n^{-1/2})$.
From Theorem \ref{thm_rate}, the second term $d(\widehat{Q}, \widetilde{Q})$ converges at the rate $O_p\left(\{r_{n,d}\sqrt{\log n}\}^{\frac{2k+1}{2k+3}} \vee n^{-\frac{2k+1}{2(2k+3)}}\right)$.
Since $\frac{2k+1}{2k+3} < 1$, the convergence rate of $d(\widehat{Q}, \widetilde{Q})$ is slower and thus dominates the error bound as $r_{n,d} \to 0$.
Therefore,
$$d(\widehat{f}_{\text{fair}}, f^*_{\text{fair}}) = O_p\left(\{r_{n,d}\sqrt{\log(n)}\}^{\frac{2k+1}{2k+3}}\vee n^{-\frac{2k+1}{2(2k+3)}}\right).$$
The result under the conditions of Remark \ref{rm1} follows analogously by substituting the corresponding moment-based rates.
\end{proof}

\subsection{Extension to multiple protected groups} \label{app:multi_group}

In the general setting where the sensitive attribute $S$ takes values in $\mathcal{S}=\{0,1,\dots,K-1\}$ with probabilities $r_s = P(S=s)$, the framework extends naturally. The Fair Regression Problem seeks a predictor $f$ such that $f(X^s, s) \overset{d}{=} f(X^{s'}, s')$ for all $s, s'$.

The optimal fair predictor is given by $f(x,s) = \widetilde{Q} \circ F_s^* \circ f^*(x,s)$, where $\widetilde{Q}$ minimizes the aggregate risk across all groups. Conditional on $U=u$, $\widetilde{Q}(u)$ is obtained by solving the following pointwise optimization problem:
$$
\widetilde{Q}(u) = \argmin_{q \in \mathbb{R}} \sum_{s=1}^{K} r_s \mathbb{E}[\mathcal{C}(Q_s^*(U), q) \mid U=u].
$$
This formulation allows us to determine the explicit form of the optimal fair predictor for various regression tasks:

\begin{itemize}
    \item \textbf{Squared Loss ($L_2$):} Under the additive error model $Y = f^*(X,S) + \varepsilon$, the cost function corresponds to the squared loss. The minimizer of the weighted sum of squared differences is simply the weighted average. Thus, $\widetilde{Q}(u)$ is the probability-weighted average of the group-specific latent quantiles:
    $$ \widetilde{Q}(u) = \sum_{s=1}^{K} r_s Q_s^*(u). $$

    \item \textbf{Binary Classification (Cross-Entropy):} For binary classification where $Y|X,S \sim \text{Bernoulli}(f^*(X,S))$, the minimizer of the aggregate cross-entropy loss is the expected probability. Thus, $\widetilde{Q}(u)$ is the weighted average of the latent probabilities:
    $$ \widetilde{Q}(u) = \sum_{s=1}^{K} r_s Q_s^*(u). $$

    \item \textbf{Quantile Regression (Pinball Loss):} Consider the pinball loss $\mathcal{L}(q, y) = \rho_\tau(y-q)$ for a target quantile level $\tau \in (0,1)$. The optimal value $\widetilde{Q}(u)$ minimizes the weighted sum of expected pinball losses. This implies that $\widetilde{Q}(u)$ is the $\tau$-th quantile of the mixture distribution $\sum_{s=1}^K r_s F_s(\cdot|Q_s^*(u))$. It is defined implicitly by the equation:
    $$ \sum_{s=1}^{K} r_s F_s(\widetilde{Q}(u) \mid Q_s^*(u)) = \tau. $$

    \item \textbf{Robust Regression (Huber Loss):} For the Huber loss with parameter $\delta$, $\widetilde{Q}(u)$ corresponds to the robust M-estimator of location for the mixture distribution. It is the solution to the stationarity equation:
    $$ \sum_{s=1}^{K} r_s \mathbb{E}\left[ \psi_\delta(Y - \widetilde{Q}(u)) \Big| S=s, f^*(X,s)=Q_s^*(u) \right] = 0, $$
    where $\psi_\delta$ is the Huber influence function.
\end{itemize}

\end{document}